
\documentclass[journal]{IEEEtran}
%
% If IEEEtran.cls has not been installed into the LaTeX system files,
% manually specify the path to it like:
% \documentclass[journal]{../sty/IEEEtran}

\pdfminorversion=4

% Some very useful LaTeX packages include:
% (uncomment the ones you want to load)

% *** MISC UTILITY PACKAGES ***
%
%\usepackage{ifpdf}
% Heiko Oberdiek's ifpdf.sty is very useful if you need conditional
% compilation based on whether the output is pdf or dvi.
% usage:
% \ifpdf
%   % pdf code
% \else
%   % dvi code
% \fi
% The latest version of ifpdf.sty can be obtained from:
% http://www.ctan.org/pkg/ifpdf
% Also, note that IEEEtran.cls V1.7 and later provides a builtin
% \ifCLASSINFOpdf conditional that works the same way.
% When switching from latex to pdflatex and vice-versa, the compiler may
% have to be run twice to clear warning/error messages.

% *** CITATION PACKAGES ***
%
%\usepackage{cite}
% cite.sty was written by Donald Arseneau
% V1.6 and later of IEEEtran pre-defines the format of the cite.sty package
% \cite{} output to follow that of the IEEE. Loading the cite package will
% result in citation numbers being automatically sorted and properly
% "compressed/ranged". e.g., [1], [9], [2], [7], [5], [6] without using
% cite.sty will become [1], [2], [5]--[7], [9] using cite.sty. cite.sty's
% \cite will automatically add leading space, if needed. Use cite.sty's
% noadjust option (cite.sty V3.8 and later) if you want to turn this off
% such as if a citation ever needs to be enclosed in parenthesis.
% cite.sty is already installed on most LaTeX systems. Be sure and use
% version 5.0 (2009-03-20) and later if using hyperref.sty.
% The latest version can be obtained at:
% http://www.ctan.org/pkg/cite
% The documentation is contained in the cite.sty file itself.

% *** GRAPHICS RELATED PACKAGES ***
%
\ifCLASSINFOpdf
  % \usepackage[pdftex]{graphicx}
  % declare the path(s) where your graphic files are
  % \graphicspath{{../pdf/}{../jpeg/}}
  % and their extensions so you won't have to specify these with
  % every instance of \includegraphics
  % \DeclareGraphicsExtensions{.pdf,.jpeg,.png}
\else
  % or other class option (dvipsone, dvipdf, if not using dvips). graphicx
  % will default to the driver specified in the system graphics.cfg if no
  % driver is specified.
  % \usepackage[dvips]{graphicx}
  % declare the path(s) where your graphic files are
  % \graphicspath{{../eps/}}
  % and their extensions so you won't have to specify these with
  % every instance of \includegraphics
  % \DeclareGraphicsExtensions{.eps}
\fi

\usepackage{graphics} % for pdf, bitmapped graphics files
\usepackage{graphicx}
\usepackage{times} % assumes new font selection scheme installed
\usepackage{amsmath} % assumes amsmath package installed
\usepackage{amssymb}  % assumes amsmath package installed
\usepackage{algorithm}
\usepackage{color}

\usepackage{verbatim}

\usepackage{amsthm}

\usepackage{mathtools}

\usepackage{subfigure}

\newtheorem{theorem}{\bf Theorem} \newtheorem{definition}{\bf Definition} 
\newtheorem{lemma}{\bf Lemma} \newtheorem{remark}{\bf Remark}
  \newtheorem{proposition}{\bf Proposition} 
\newtheorem{assumption}{\bf Assumption}  
\newtheorem{Algorithm}{\bf Algorithm}

% correct bad hyphenation here
\hyphenation{op-tical net-works semi-conduc-tor}

\begin{document}
%
% paper title
% Titles are generally capitalized except for words such as a, an, and, as,
% at, but, by, for, in, nor, of, on, or, the, to and up, which are usually
% not capitalized unless they are the first or last word of the title.
% Linebreaks \\ can be used within to get better formatting as desired.
% Do not put math or special symbols in the title.
\title{Data-Driven Model Predictive Control with Stability and Robustness Guarantees}
%
%
% author names and IEEE memberships
% note positions of commas and nonbreaking spaces ( ~ ) LaTeX will not break
% a structure at a ~ so this keeps an author's name from being broken across
% two lines.
% use \thanks{} to gain access to the first footnote area
% a separate \thanks must be used for each paragraph as LaTeX2e's \thanks
% was not built to handle multiple paragraphs
%

\author{Julian Berberich$^1$, %~\IEEEmembership{Member,~IEEE,}
		Johannes K\"ohler$^1$, %~\IEEEmembership{Fellow,~OSA,}
		Matthias A. M\"uller$^2$, %~\IEEEmembership{Life~Fellow,~IEEE}
		and Frank Allg\"ower$^1$.% <-this % stops a space
		\thanks{This work was funded by Deutsche Forschungsgemeinschaft (DFG,
German Research Foundation) under Germany’s Excellence Strategy - EXC
2075 - 390740016. The authors thank the International Max Planck Research
School for Intelligent Systems (IMPRS-IS) for supporting Julian Berberich,
and the International Research Training Group Soft Tissue Robotics (GRK
2198/1).}
\thanks{$^1$Julian Berberich, Johannes K\"ohler, and Frank Allg\"ower are with the Institute for Systems Theory and Automatic Control, University of Stuttgart, 70550 Stuttgart, Germany (email:$\{$ julian.berberich, johannes.koehler, frank.allgower$\}$@ist.uni-stuttgart.de)}
\thanks{$^2$Matthias A. M\"uller is with the Leibniz University Hannover, Institute of Automatic Control, 30167 Hannover, Germany (e-mail:mueller@irt.uni-hannover.de)}}% <-this % stops a space
%\thanks{Manuscript received April 19, 2005; revised August 26, 2015.}}

% note the % following the last \IEEEmembership and also \thanks - 
% these prevent an unwanted space from occurring between the last author name
% and the end of the author line. i.e., if you had this:
% 
% \author{....lastname \thanks{...} \thanks{...} }
%                     ^------------^------------^----Do not want these spaces!
%
% a space would be appended to the last name and could cause every name on that
% line to be shifted left slightly. This is one of those "LaTeX things". For
% instance, "\textbf{A} \textbf{B}" will typeset as "A B" not "AB". To get
% "AB" then you have to do: "\textbf{A}\textbf{B}"
% \thanks is no different in this regard, so shield the last } of each \thanks
% that ends a line with a % and do not let a space in before the next \thanks.
% Spaces after \IEEEmembership other than the last one are OK (and needed) as
% you are supposed to have spaces between the names. For what it is worth,
% this is a minor point as most people would not even notice if the said evil
% space somehow managed to creep in.

% The paper headers
\markboth{}%
{}
% The only time the second header will appear is for the odd numbered pages
% after the title page when using the twoside option.
% 
% *** Note that you probably will NOT want to include the author's ***
% *** name in the headers of peer review papers.                   ***
% You can use \ifCLASSOPTIONpeerreview for conditional compilation here if
% you desire.

% If you want to put a publisher's ID mark on the page you can do it like
% this:
%\IEEEpubid{0000--0000/00\$00.00~\copyright~2015 IEEE}
% Remember, if you use this you must call \IEEEpubidadjcol in the second
% column for its text to clear the IEEEpubid mark.

% use for special paper notices
%\IEEEspecialpapernotice{(Invited Paper)}

% make the title area
\IEEEoverridecommandlockouts

\IEEEpubid{\begin{minipage}{\textwidth}\ \\[12pt] \\ \\
\copyright 2020 IEEE. Personal use of this material is permitted. Permission from IEEE must be obtained for all other uses, in any current or future media, including reprinting/republishing this material for advertising or promotional purposes, creating new collective works, for resale or redistribution to servers or lists, or reuse of any copyrighted component of this work in other works.
\end{minipage}}

\maketitle

% As a general rule, do not put math, special symbols or citations
% in the abstract or keywords.
\begin{abstract}
We propose a robust data-driven model predictive control (MPC) scheme to control linear time-invariant (LTI) systems.
The scheme uses an implicit model description based on behavioral systems theory and past measured trajectories.
In particular, it does not require any prior identification step, but only an initially measured input-output trajectory as well as an upper bound on the order of the unknown system.
First, we prove exponential stability of a nominal data-driven MPC scheme with terminal equality constraints in the case of no measurement noise.
For bounded additive output measurement noise, we propose a robust modification of the scheme, including a slack variable with regularization in the cost.
We prove that the application of this robust MPC scheme in a multi-step fashion leads to practical exponential stability of the closed loop w.r.t. the noise level.
%, where the region of attraction approaches the set of all feasible initial conditions.
%This requires an appropriate selection of the regularization parameters, a low noise level, as well as a sufficiently large persistence of excitation relative to the noise amplitude.
The presented results provide the first (theoretical) analysis of closed-loop properties, resulting from a simple, purely data-driven MPC scheme.
\end{abstract}

% Note that keywords are not normally used for peerreview papers.
\begin{IEEEkeywords}
Predictive control for linear systems, data-driven control, uncertain systems, robust control.
\end{IEEEkeywords}

% For peer review papers, you can put extra information on the cover
% page as needed:
% \ifCLASSOPTIONpeerreview
% \begin{center} \bfseries EDICS Category: 3-BBND \end{center}
% \fi
%
% For peerreview papers, this IEEEtran command inserts a page break and
% creates the second title. It will be ignored for other modes.
\IEEEpeerreviewmaketitle

\section{Introduction}
%\JK{\cite{rosolia2018learning}: hat ein Model, aber verbessert performance online, nicht klassisches Data-Driven/Model-free (kann man wegen mir auch weglassen).}\\
While data-driven methods for system analysis and control have become increasingly popular over the recent years, only few such methods give theoretical guarantees on, e.g., stability or constraint satisfaction of system variables~\cite{Hou13,Recht18}.
A control method, which is naturally well-suited for achieving these objectives is model predictive control (MPC), which can handle nonlinear system dynamics, hard constraints on input, state and output, and it takes performance criteria into account~\cite{Rawlings09}.
It centers around the repeated online solution of an optimization problem over predicted future system trajectories.
Thus, for the implementation of MPC, a model of the plant is required, which is usually obtained from first principles or from measured data via system identification~\cite{Ljung87}.
An appealing alternative is to implement an MPC controller directly from measured data, without prior knowledge of an accurate model.
In various recent works, learning-based or adaptive MPC schemes have been proposed, which improve an inaccurate initial model using online measurements~\cite{adetola2011robust,Aswani13,tanaskovic2014adaptive,Berkenkamp17,Zanon19}, while giving guarantees on the resulting closed loop.
Similarly, MPC based on Gaussian Processes has received increasing attraction~\cite{Hewing18}, but proving desirable closed-loop properties remains an open issue.
A different approach, which uses linear combinations of past trajectories to predict future trajectories, has been presented in~\cite{salvador2018data}, but also no guarantees on, e.g., stability of the closed loop were given.
The design of purely data-driven MPC approaches with guarantees on stability and constraint satisfaction thus remains an open problem.

In this paper, we present a novel data-driven MPC scheme to control linear time-invariant (LTI) systems with stability and robustness guarantees for the closed loop.
Our approach relies on a result from behavioral systems theory, which shows that the Hankel matrix consisting of a previously measured input-output trajectory spans the vector space of all trajectories of an LTI system, given that the input component is persistently exciting~\cite{Willems05}.
Although this result has found various applications in the field of system identification~\cite{Katayama05,Markovsky05,Markovsky08}, it has only recently been used to develop data-driven methods for system analysis and control with theoretical guarantees.
An exposition of the main result of~\cite{Willems05} in the classical state-space control framework and an extension to certain classes of nonlinear systems are provided in~\cite{Berberich19}.
Further, the result is employed in~\cite{Persis19} to design state- and output-feedback controllers and in~\cite{Romer19} to verify dissipation inequalities from measured data, whereas~\cite{Waarde19} investigates data-driven control without requiring persistently exciting data.

Moreover, the recent contributions~\cite{Yang15,Coulson19,Coulson19b} set up an MPC scheme based on~\cite{Willems05}, but no guarantees on recursive feasibility or closed-loop stability can be given since neither terminal ingredients are included in the MPC scheme nor sufficient lower bounds on the prediction horizon are derived.
In the present paper, we propose a related MPC scheme, which utilizes terminal equality constraints, and we provide a theoretical analysis of various desirable properties of the closed loop.
To the best of our knowledge, this is the first analysis regarding recursive feasibility and stability
of purely data-driven MPC.
The main advantage of the proposed MPC scheme over existing adaptive or learning-based methods such as~\cite{adetola2011robust,Aswani13,tanaskovic2014adaptive,Berkenkamp17,Zanon19} is that it requires only an initially measured, persistently exciting data trajectory as well as an upper bound on the system order, but no (set-based) model description and no online estimation process.
Moreover, since it relies on the data-driven system description from~\cite{Willems05}, the presented scheme is inherently an output-feedback MPC scheme and does not require online state measurements.\newpage

After stating the required definitions and existing results in Section~\ref{sec:setting}, we expand the nominal MPC scheme of~\cite{Yang15,Coulson19} by terminal equality constraints in Section~\ref{sec:tec}.
Under the assumption that the output of the plant can be measured exactly, we prove recursive feasibility, constraint satisfaction, and exponential stability of the scheme.
In Section~\ref{sec:robust}, we propose a robust data-driven MPC scheme to account for bounded additive noise in both the initial data for prediction as well as the online measurements.
%Throughout Section~\ref{sec:robust}, we do not consider output constraints, but we plan to address them in future research.
%The scheme includes  output constraints, which ensure that the closed-loop output satisfies the constraints despite the noisy data.
Under suitable assumptions on the system and design parameters, we prove that the closed loop under application of the scheme in a multi-step fashion leads to a practically exponentially stable closed loop.
In Section~\ref{sec:example}, we illustrate the advantages of the proposed scheme over the scheme without terminal constraints from~\cite{Yang15,Coulson19,Coulson19b} by means of a numerical example.
The paper is concluded in Section~\ref{sec:conclusion}.

\section{Preliminaries}\label{sec:setting}

Let $\mathbb{I}_{[a,b]}$ denote the set of integers in the interval $[a,b]$.
For a vector $x$ and a positive definite matrix $P=P^\top\succ0$, we write $\lVert x\rVert_P=\sqrt{x^\top Px}$.
Further, we denote the minimal and maximal eigenvalue of $P$ by $\lambda_{\min}(P)$ and $\lambda_{\max}(P)$, respectively.
For two matrices $P_1=P_1^\top,P_2=P_2^\top$, we write $\lambda_{\min}(P_1,P_2)=\min\{\lambda_{\min}(P_1),\lambda_{\min}(P_2)\}$, and similarly for $\lambda_{\max}(P_1,P_2)$.
Moreover, $\lVert x\rVert_2$, $\lVert x\rVert_1$, and $\lVert x\rVert_\infty$ denote the Euclidean, $\ell_1$-, and $\ell_\infty$-norm of $x$, respectively.
If the argument is matrix-valued, then we mean the corresponding induced norm.
For $\delta>0$, we define $\mathbb{B}_\delta=\left\{x\in\mathbb{R}^n\mid \lVert x\rVert_2\leq\delta\right\}$.
A sequence $\{x_k\}_{k=0}^{N-1}$ induces the Hankel matrix
\begin{align*}
H_L&(x)\coloneqq\begin{bmatrix}x_0 & x_1& \dots & x_{N-L}\\
x_1 & x_2 & \dots & x_{N-L+1}\\
\vdots & \vdots & \ddots & \vdots\\
x_{L-1} & x_{L} & \dots & x_{N-1}
\end{bmatrix}.
\end{align*}
For a stacked window of the sequence, we write
\begin{align*}
x_{[a,b]}=\begin{bmatrix}x_a\\\vdots\\x_b\end{bmatrix}.
\end{align*}
We denote by $x$ either the sequence itself or the stacked vector $x_{[0,N-1]}$ containing all of its components.
We consider the following standard definition of persistence of excitation.

\begin{definition}
We say that a sequence $\{u_k\}_{k=0}^{N-1}$ with $u_k\in\mathbb{R}^m$ is persistently exciting of order $L$ if $\text{rank}(H_L(u))=mL$.
\end{definition}

Our goal is to control an unknown LTI system, denoted by $G$, of order $n$ with $m$ inputs and $p$ outputs, using only measured input-output data.

\begin{definition}\label{def:trajectory_of}
We say that an input-output sequence $\{u_k,y_k\}_{k=0}^{N-1}$ is a trajectory of an LTI system $G$, if there exists an initial condition $\bar{x}\in\mathbb{R}^n$ as well as a state sequence $\{x_k\}_{k=0}^{N}$ such that
\begin{align*}
x_{k+1}&=Ax_k+Bu_k,\>\>x_0=\bar{x},\\
y_k&=Cx_k+Du_k,
\end{align*}
for $k=0,\dots,N-1$, where $(A,B,C,D)$ is a minimal realization of $G$.
\end{definition}

Note that we define a trajectory of an LTI system as an input-output sequence that can be produced by a minimal realization, entailing controllability and observability of the system.
Extending the results of this paper to systems whose input-output behavior cannot be explained via a minimal realization is an interesting issue for future research.
The following result lays the foundation of the present paper.
It shows that a Hankel matrix, involving a single persistently exciting trajectory, spans the vector space of all system trajectories of an LTI system.
The result originates from behavioral systems theory~\cite{Willems05}, but we employ the formulation in the classical state-space control framework~\cite{Berberich19}.

\begin{theorem}[\cite{Berberich19}]\label{thm:traj_rep}
Suppose $\{u_k^d,y_k^d\}_{k=0}^{N-1}$ is a trajectory of an LTI system $G$, where $u^d$ is persistently exciting of order $L+n$.
Then, $\{\bar{u}_k,\bar{y}_k\}_{k=0}^{L-1}$ is a trajectory of $G$ if and only if there exists $\alpha\in\mathbb{R}^{N-L+1}$ such that
\begin{align}\label{eq:thm_hankel}
\begin{bmatrix}H_L(u^d)\\H_L(y^d)\end{bmatrix}\alpha
=\begin{bmatrix}\bar{u}\\\bar{y}\end{bmatrix}.
\end{align}
\end{theorem}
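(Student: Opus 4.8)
The plan is to prove the two implications separately: the sufficiency (``if'') direction follows from the vector-space structure of system trajectories, while the necessity (``only if'') direction is handled by a dimension-counting argument that reduces the claim to a rank condition on the Hankel matrix.

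For the \textbf{``if'' direction}, I would first note that each column of $\begin{bmatrix}H_L(u^d)\\H_L(y^d)\end{bmatrix}$ is a length-$L$ window $\{u^d_{k+j},y^d_{k+j}\}_{j=0}^{L-1}$ of the data trajectory and is therefore itself a trajectory of $G$, generated from the state $x^d_k$ by time-invariance. Since $(A,B,C,D)$ is linear, the set of length-$L$ trajectories is a linear subspace of $\mathbb{R}^{(m+p)L}$: any combination $\sum_i\alpha_i(u^{(i)},y^{(i)})$ of trajectories with initial states $x_0^{(i)}$ is again a trajectory, with initial state $\sum_i\alpha_i x_0^{(i)}$. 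Hence any $\begin{bmatrix}\bar{u}\\\bar{y}\end{bmatrix}$ satisfying \eqref{eq:thm_hankel} lies in this subspace and is a trajectory of $G$.

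For the \textbf{``only if'' direction}, the key is to identify the dimension of the trajectory space. Since the realization is minimal, for $L\geq n$ the map $(x_0,u_{[0,L-1]})\mapsto(u_{[0,L-1]},y_{[0,L-1]})$ is injective (observability pins down $x_0$), so the space of length-$L$ trajectories has dimension exactly $mL+n$. As the column space of the Hankel matrix is contained in this subspace, it suffices to show $\text{rank}\begin{bmatrix}H_L(u^d)\\H_L(y^d)\end{bmatrix}=mL+n$; then the columns span the entire trajectory subspace and every trajectory admits a representation of the form \eqref{eq:thm_hankel}. To compute this rank I would introduce the state sequence $\{x^d_k\}$ and the matrix $X_0=\begin{bmatrix}x^d_0&\cdots&x^d_{N-L}\end{bmatrix}$, and use the factorization
\begin{align*}
\begin{bmatrix}H_L(u^d)\\H_L(y^d)\end{bmatrix}
=\begin{bmatrix}I_{mL}&0\\\mathcal{T}_L&\mathcal{O}_L\end{bmatrix}
\begin{bmatrix}H_L(u^d)\\X_0\end{bmatrix},
\end{align*}
where $\mathcal{O}_L$ is the extended observability matrix and $\mathcal{T}_L$ the lower-triangular Toeplitz matrix of Markov parameters. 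Since minimality and $L\geq n$ make $\mathcal{O}_L$ full column rank, the left factor has full column rank $mL+n$, so the rank of the product equals the rank of $\begin{bmatrix}H_L(u^d)\\X_0\end{bmatrix}$.

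The \textbf{main obstacle} is therefore the remaining claim that $\begin{bmatrix}H_L(u^d)\\X_0\end{bmatrix}\in\mathbb{R}^{(mL+n)\times(N-L+1)}$ has full row rank $mL+n$, which is exactly where persistence of excitation and controllability enter. I would argue by contradiction: a nonzero vector in the left kernel would produce a linear relation between the windowed inputs and the states that, via the Cayley--Hamilton theorem and controllability of $(A,B)$, forces a nontrivial annihilating relation on $H_{L+n}(u^d)$, contradicting that $u^d$ is persistently exciting of order $L+n$. The persistence-of-excitation order $L+n$, rather than $L$, is precisely what is needed to absorb the $n$ extra dimensions contributed by the state. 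Finally, I would check dimensional consistency, namely that persistence of excitation guarantees $N-L+1\geq mL+n$ columns, so that full row rank $mL+n$ is attainable, completing the argument.
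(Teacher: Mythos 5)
The paper does not actually prove Theorem~\ref{thm:traj_rep}; it imports it verbatim from~\cite{Berberich19} (itself a state-space reformulation of the fundamental lemma of~\cite{Willems05}). Your outline is essentially the proof given in that reference: sufficiency by linearity of the trajectory space, necessity by the factorization
$\bigl[\begin{smallmatrix}H_L(u^d)\\H_L(y^d)\end{smallmatrix}\bigr]=\bigl[\begin{smallmatrix}I&0\\\mathcal{T}_L&\mathcal{O}_L\end{smallmatrix}\bigr]\bigl[\begin{smallmatrix}H_L(u^d)\\X_0\end{smallmatrix}\bigr]$
together with a dimension count, reducing everything to the claim that $\bigl[\begin{smallmatrix}H_L(u^d)\\X_0\end{smallmatrix}\bigr]$ has full row rank $mL+n$. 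So the structure is right and matches the source.

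The one substantive issue is that this full-row-rank claim is precisely~\cite[Corollary~2]{Willems05} and carries essentially all of the mathematical content of the persistence-of-excitation hypothesis, yet you only sketch it. Your Cayley--Hamilton idea is the correct mechanism, but as stated it does not obviously close: if the left-annihilator is $(\xi,\eta)$ with $\xi\neq 0$, eliminating the states via $\sum_{j=0}^n a_jA^j=0$ does produce a nonzero annihilator of $H_{L+n}(u^d)$ (the coefficient multiplying the last input block survives), but in the degenerate case $\xi=0$, $\eta\neq 0$ the resulting functional acts only on $u^d_{[k,k+n-1]}$ and you must invoke controllability of $(A,B)$ to show it is nonzero; note that controllability is available here only because the paper's Definition~\ref{def:trajectory_of} insists on a \emph{minimal} realization. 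You should either work this case out or cite the corollary directly, as the paper itself does in~\eqref{eq:ass_pe_matrix}. A second, minor point: your necessity argument assumes $L\geq n$ (for injectivity of the observability map and full column rank of $\mathcal{O}_L$), whereas the theorem statement imposes no such restriction; for $L<n$ the trajectory space has dimension $mL+\operatorname{rank}(\mathcal{O}_L)<mL+n$ and the rank bookkeeping must be adjusted (the product's rank then equals the rank of the left factor, since the right factor still has full row rank). In the paper's applications $L\geq n$ always holds, so this costs nothing there, but it should be flagged if you claim the theorem as stated.
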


Recently, Theorem~\ref{thm:traj_rep} has received increasing attention to develop data-driven controllers~\cite{Persis19}, verify dissipativity~\cite{Romer19}, or to design MPC schemes~\cite{Yang15,Coulson19,Coulson19b}.
This is due to the fact that~\eqref{eq:thm_hankel} provides an appealing data-driven characterization of all trajectories of the unknown LTI system, without requiring any prior identification step.
In this paper, we use Theorem~\ref{thm:traj_rep} to develop a data-driven MPC scheme with \emph{provable} %guarantees on stability and constraint satisfaction despite noisy measurements.
stability guarantees despite noisy measurements.
Note that, if a sequence is persistently exciting of order $L$, then it is also persistently exciting of order $\tilde{L}$ for any $\tilde{L}\leq L$.
Therefore, Theorem~\ref{thm:traj_rep} and hence all of our results hold true if $n$ is replaced by a (potentially rough) upper bound.

Although we assume that only input-output data of the unknown system are available, we make extensive use of the fact that an input-output trajectory of length greater than or equal to $n$ induces a unique internal state in some minimal realization of the unknown system.
\begin{comment}
The following result shows that input and initial conditions suffice to fix an $\alpha$, which predicts the corresponding output.
The following result is from data-driven simulation~\cite{Markovsky08,Berberich19} and slightly reformulated.
%
\begin{proposition}\label{prop:DD_sim}
Suppose $\{u_k^d,y_k^d\}_{k=0}^{N-1}$ is a trajectory of an LTI system $G$, where $u$ is persistently exciting of order $L+n$.
Let $\{\bar{u}_k,\bar{y}_k\}_{k=0}^{L-1}$ be an arbitrary trajectory of $G$.
Denote by $\{x_k^d\}_{k=0}^{N-1}$ and $\{\bar{x}_k\}_{k=0}^{L-1}$ the corresponding state trajectories in some minimal realization.
If $\nu\geq n$, then there exists an $\alpha\in\mathbb{R}^{N-L+1}$ such that
%
\begin{align}\label{eq:prop_DD_sim}
\begin{bmatrix}
H_L\left(u^d\right)\\H_1\left(x^d_{[0,N-L]}\right)
\end{bmatrix}\alpha=
\begin{bmatrix}
\bar{u}\\\bar{x}_0
\end{bmatrix}.
\end{align}
%
Further, it holds that $\bar{y}=H_L(y^d)\alpha$.
\end{proposition}
\begin{proof}
Follows directly from~\cite[Proposition 1]{Markovsky08}.
\end{proof}
%
Since the matrix on the left-hand-side of~\eqref{eq:prop_DD_sim} will be essential for many of our arguments, we abbreviate it as 
%
\begin{align}\label{eq:Hux}
H_{ux}\coloneqq\begin{bmatrix}
H_L\left(u^d\right)\\H_1\left(x^d_{[0,N-L]}\right)
\end{bmatrix}.
\end{align}
%
A simple choice to satisfy~\eqref{eq:prop_DD_sim} is $\alpha=H_{ux}^\dagger\begin{bmatrix}\bar{u}\\\bar{x}_0\end{bmatrix}$, where $H_{ux}^\dagger$ denotes the right-inverse of $H_{ux}$.y
\end{comment}
We employ MPC to stabilize a desired equilibrium of the system.
Since a model of this system is not available, we define an equilibrium via input-output pairs.

\begin{definition}\label{def:equil}
We say that an input-output pair $(u^s,y^s)\in\mathbb{R}^{m+p}$ is an equilibrium of an LTI system $G$, if the sequence $\{\bar{u}_k,\bar{y}_k\}_{k=0}^{n}$ with $(\bar{u}_k,\bar{y}_k)=(u^s,y^s)$ for all $k\in\mathbb{I}_{[0,n]}$ is a trajectory of $G$.
\end{definition}

For an equilibrium $(u^s,y^s)$, we define $u^s_n$ and $y^s_n$ as the column vectors containing $n$ times $u^s$ and $y^s$, respectively.
We assume that the system is subject to pointwise-in-time input and output constraints, i.e., $u_t\in\mathbb{U}\subseteq\mathbb{R}^m$, $y_t\in\mathbb{Y}\subseteq\mathbb{R}^p$ for all $t\geq0$, and we assume $(u^s,y^s)\in\text{int}(\mathbb{U}\times\mathbb{Y})$.
Throughout this paper, $\left\{u_k^d,y_k^d\right\}_{k=0}^{N-1}$ denotes an a priori measured data trajectory of length $N$, which is used for prediction as in~\eqref{eq:thm_hankel}.
The predicted input- and output-trajectories at time $t$ over some prediction horizon $L$ are written as $\left\{\bar{u}_k(t),\bar{y}_k(t)\right\}_{k=-n}^{L-1}$.
Note that the time indices start at $k=-n$, since the last $n$ inputs and outputs will be used to invoke a unique initial state at time $t$.
Further, the closed-loop input, the state in some minimal realization, and the output at time $t$ are denoted by $u_t$, $x_t$, and $y_t$, respectively.

\section{Nominal data-driven MPC}\label{sec:tec}
In this section, we propose a simple, nominal data-driven MPC scheme with terminal equality constraints. 
The scheme relies on noise-free measurements to predict future trajectories using Theorem~\ref{thm:traj_rep} and is described in Section~\ref{sec:tec_scheme}.
Under mild assumptions, we prove recursive feasibility, constraint satisfaction, and exponential stability of the closed loop in Section~\ref{sec:stab}.
 
\subsection{Nominal MPC scheme}\label{sec:tec_scheme}
Commonly, MPC relies on a model of the plant to predict future trajectories and to optimize over them.
Theorem~\ref{thm:traj_rep} provides an appealing alternative to a model since~\eqref{eq:thm_hankel} suffices to capture all system trajectories.
Thus, to implement a data-driven MPC scheme, one can simply replace the system dynamics constraint by the constraint that the predicted input-output trajectories satisfy~\eqref{eq:thm_hankel}.
To be more precise, the proposed data-driven MPC scheme minimizes, at time $t$, given the last $n$ input-output pairs, the following open-loop cost\\
\begin{subequations}
\begin{align}
J_L(u_{[t-n,t-1]},y_{[t-n,t-1]},&\alpha(t))=\sum_{k=0}^{L-1}\ell\left(\bar{u}_k(t),\bar{y}_k(t)\right),\\\label{eq:MPC_modela}
\begin{bmatrix}\bar{u}_{[-n,L-1]}(t)\\\bar{y}_{[-n,L-1]}(t)\end{bmatrix}&=\begin{bmatrix}H_{L+n}(u^d)\\H_{L+n}(y^d)\end{bmatrix}\alpha(t),\\
\label{eq:MPC_initial_conditionsa}
\begin{bmatrix}\bar{u}_{[-n,-1]}(t)\\\bar{y}_{[-n,-1]}(t)\end{bmatrix}&=\begin{bmatrix}u_{[t-n,t-1]}\\y_{[t-n,t-1]}\end{bmatrix}.
\end{align}
\end{subequations}
As described above, the constraint~\eqref{eq:MPC_modela} replaces the system dynamics compared to classical model-based MPC schemes.
Further,~\eqref{eq:MPC_initial_conditionsa} ensures that the internal state of the true trajectory aligns with the internal state of the predicted trajectory at time $t$.
Note that the overall length of the trajectory $(\bar{u}(t),\bar{y}(t))$ is $L+n$ since the past $n$ elements $\{\bar{u}_k(t),\bar{y}_k(t)\}_{k=-n}^{-1}$ are used to specify the initial conditions in~\eqref{eq:MPC_initial_conditionsa}.
These initial conditions are specified until time step $t-1$, since the input at time $t$ might already influence the output at time $t$, in case of a feedthrough-element of the plant.
The open-loop cost depends only on the decision variable $\alpha(t)$, since $\bar{u}(t)$ and $\bar{y}(t)$ are fixed implicitly through the dynamic constraint~\eqref{eq:MPC_modela}.
Throughout the paper, we consider quadratic stage costs, which penalize the distance w.r.t. a desired equilibrium $(u^s,y^s)$, i.e.,
\begin{align*}
\ell(\bar{u},\bar{y})=\lVert\bar{u}-u^s\rVert_R^2+\lVert\bar{y}-y^s\rVert_Q^2,
\end{align*}
where $Q,R\succ0$.
In~\cite{Yang15,Coulson19}, it was suggested to directly minimize the above open-loop cost subject to constraints on input and output.
It is well-known that MPC without terminal constraints requires a sufficiently long prediction horizon to ensure stability and constraint satisfaction~\cite{grune2017nonlinear,grune2012nmpc}.
Without such an assumption, the application of MPC can even destabilize an open-loop stable system.
There are two main approaches in the literature to guarantee stability:
a) providing bounds on the minimal required prediction horizon~\cite{grune2012nmpc} and b) including terminal ingredients such as terminal cost functions or terminal region constraints~\cite{Mayne00}.
Both approaches are usually based on model knowledge and thus, it is not straightforward to use them in the present, purely data-driven setting.

In this paper, we consider a simple terminal equality constraint, which can be directly included into the data-driven MPC framework, and which guarantees exponential stability of the closed loop.
To this end, we propose the following data-driven MPC scheme with a terminal equality constraint.
\begin{subequations}\label{eq:term_eq_MPC}
\begin{align}\nonumber
J_L^*(u_{[t-n,t-1]}&,y_{[t-n,t-1]})=\\
\underset{\substack{\alpha(t)\\\bar{u}(t),\bar{y}(t)}}{\min}\>\>&\sum_{k=0}^{L-1}\ell\left(\bar{u}_k(t),\bar{y}_k(t)\right)\\\label{eq:MPC_model}
s.t.&\begin{bmatrix}\bar{u}_{[-n,L-1]}(t)\\\bar{y}_{[-n,L-1]}(t)\end{bmatrix}=\begin{bmatrix}H_{L+n}(u^d)\\H_{L+n}(y^d)\end{bmatrix}\alpha(t),\\
\label{eq:MPC_initial_conditions}
&\begin{bmatrix}\bar{u}_{[-n,-1]}(t)\\\bar{y}_{[-n,-1]}(t)\end{bmatrix}=\begin{bmatrix}u_{[t-n,t-1]}\\y_{[t-n,t-1]}\end{bmatrix},\\\label{eq:term_eq_MPC2}
&\begin{bmatrix}\bar{u}_{[L-n,L-1]}(t)\\\bar{y}_{[L-n,L-1]}(t)\end{bmatrix}=\begin{bmatrix}u^s_n\\y^s_n\end{bmatrix},\\\label{eq:term_eq_MPC3}
&\bar{u}_k(t)\in\mathbb{U},\>\>\bar{y}_k(t)\in\mathbb{Y},\>\>k\in\mathbb{I}_{[0,L-1]}.
\end{align}
\end{subequations}
The terminal equality constraint~\eqref{eq:term_eq_MPC2} implies that $\bar{x}_L(t)$, which is the internal state predicted $L$ steps ahead corresponding to the predicted input-output trajectory, aligns with the steady-state $x^s$ corresponding to $(u^s,y^s)$, i.e., $\bar{x}_L(t)=x^s$ in any minimal realization.
While Problem~\eqref{eq:term_eq_MPC} requires that $(u^s,y^s)$ is an equilibrium of the unknown system in the sense of Definition~\ref{def:equil}, this requirement can be dropped when $(u^s,y^s)$ is replaced by an artificial equilibrium, which is also optimized online (compare~\cite{limon2008mpc}).
The recent paper~\cite{berberich2020tracking} extends the above MPC scheme to such a setting, thereby leading to a significantly larger region of attraction for the closed loop without requiring knowledge of a reachable equilibrium of the unknown system.
As in standard MPC, Problem~\eqref{eq:term_eq_MPC} is solved in a receding horizon fashion, which is summarized in Algorithm~\ref{alg:MPC}.

\begin{algorithm}
\begin{Algorithm}\label{alg:MPC}
\normalfont{\textbf{Data-Driven MPC Scheme}}
\begin{enumerate}
\item At time $t$, take the past $n$ measurements $u_{[t-n,t-1]}$, $y_{[t-n,t-1]}$ and solve~\eqref{eq:term_eq_MPC}.
\item Apply the input $u_t=\bar{u}_0^*(t)$.
\item Set $t=t+1$ and go back to 1).
\end{enumerate}
\end{Algorithm}
\end{algorithm}

With slight abuse of notation, we will denote the open-loop cost and the optimal open-loop cost of~\eqref{eq:term_eq_MPC} by $J_L(x_t,\alpha(t))$ and $J_L^*(x_t)$, respectively, %which are defined to be $J_L(u_{[t-n,t-1]},y_{[t-n,t-1]},\alpha(t))$ and $J_L^*(u_{[t-n,t-1]},y_{[t-n,t-1]})$, respectively, 
where $x_t$ is the state in some minimal realization, induced by $u_{[t-n,t-1]}$, $y_{[t-n,t-1]}$.

\subsection{Closed-loop guarantees}\label{sec:stab}
Without loss of generality, we assume for the analysis that $u^s=0$, $y^s=0$, and thus $x^s=0$.
Further, we define the set of initial states, for which~\eqref{eq:term_eq_MPC} is feasible, by $\mathbb{X}_L=\left\{x\in\mathbb{R}^n\mid J_L^*(x)<\infty\right\}$.
To prove exponential stability of the proposed scheme, we assume that the optimal value function of~\eqref{eq:term_eq_MPC} is quadratically upper bounded.
This is, e.g., satisfied in the present linear-quadratic setting if the constraints are polytopic\footnote{While~\cite{Bemporad02} considered model-based linear-quadratic MPC, the result applies similarly to the present data-driven MPC setting since~\eqref{eq:MPC_model} (together with the initial conditions~\eqref{eq:MPC_initial_conditions}) describes the input-output behavior of the system exactly and thus, both settings are equivalent in the nominal case.}~\cite{Bemporad02}.
%\footnote{Maybe also via the theory from~\cite{rawlings2012postface}, but their proof for general class $\mathcal{K}_\infty$ functions is quite involved.}
%
\begin{assumption}\label{ass:quad_upper_bound}
The optimal value function $J_L^*(x)$ is quadratically upper bounded on $\mathbb{X}_L$, i.e., there exists $c_u>0$ such that $J_L^*(x)\leq c_u\lVert x\rVert_2^2$ for all $x\in\mathbb{X}_L$.
\end{assumption}

Moreover, we assume that the input $u^d$ generating the data used for prediction is sufficiently rich in the following sense.

\begin{assumption}\label{ass:pe}
The input $u^d$ of the data trajectory is persistently exciting of order $L+2n$.
\end{assumption}

Note that we assume persistence of excitation of order $L+2n$, although Theorem~\ref{thm:traj_rep} requires only an order of $L+n$.
This is due to the fact that the reconstructed trajectories in~\eqref{eq:term_eq_MPC} are of length $L+n$ (compared to length $L$ in Theorem~\ref{thm:traj_rep}), since $n$ components are used to fix the initial conditions.
Furthermore, due to the terminal constraints~\eqref{eq:term_eq_MPC2}, the prediction horizon needs to be at least as long as the system order $n$.

\begin{assumption}\label{ass:length_n}
The prediction horizon satisfies $L\geq n$.
\end{assumption}

The following result shows that the MPC scheme based on~\eqref{eq:term_eq_MPC} is recursively feasible, ensures constraint satisfaction, and leads to an exponentially stable closed loop.

\begin{theorem}\label{thm:mpc_tec}
Suppose Assumptions~\ref{ass:quad_upper_bound},~\ref{ass:pe} and~\ref{ass:length_n} are satisfied.
If the MPC problem~\eqref{eq:term_eq_MPC} is feasible at initial time $t=0$, then 
\begin{itemize}
\item[(i)] it is feasible at any $t\in\mathbb{N}$,
\item[(ii)] the closed loop satisfies the constraints, i.e., $u_t\in\mathbb{U}$ and $y_t\in\mathbb{Y}$ for all $t\in\mathbb{N}$,
\item[(iii)] the equilibrium $x^s=0$ is exponentially stable for the resulting closed loop.
\end{itemize}
\end{theorem}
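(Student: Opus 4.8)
The plan is to follow the classical Lyapunov-based MPC stability argument, adapted to the data-driven, output-feedback setting, using the optimal cost $J_L^*(x_t)$ as a candidate Lyapunov function. I would settle (i) and (ii) first, since constraint satisfaction is an immediate consequence of recursive feasibility, and then build the exponential-stability estimate (iii) on top of the feasibility construction.

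For recursive feasibility I would use the standard shift-and-append candidate. Given the optimizer $(\bar{u}^*(t),\bar{y}^*(t),\alpha^*(t))$ at time $t$, I would construct a candidate at time $t+1$ by dropping the first predicted step and appending one equilibrium sample $(u^s,y^s)$ at the end. In the nominal (noise-free) case the applied input $u_t=\bar{u}_0^*(t)$ produces exactly the predicted output $y_t=\bar{y}_0^*(t)$, so the shifted candidate's initial block coincides with the newly measured $u_{[t+1-n,t]},y_{[t+1-n,t]}$, satisfying~\eqref{eq:MPC_initial_conditions}; the appended equilibrium sample keeps the terminal constraint~\eqref{eq:term_eq_MPC2} satisfied because the time-$t$ trajectory already terminated at $(u^s_n,y^s_n)$; and the input-output constraints~\eqref{eq:term_eq_MPC3} hold since the retained samples were already feasible and $(u^s,y^s)\in\text{int}(\mathbb{U}\times\mathbb{Y})$. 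The only nontrivial point is that the candidate must again admit a representation~\eqref{eq:MPC_model} through the Hankel matrix; here I would invoke Theorem~\ref{thm:traj_rep}, noting that the candidate is a genuine length-$(L+n)$ trajectory of $G$ and that Assumption~\ref{ass:pe} (persistence of excitation of order $L+2n$) supplies exactly the excitation needed to represent trajectories of this length. Thus a feasible $\alpha(t+1)$ exists, proving (i), and (ii) follows at once since $u_t,y_t$ are the first entries of a feasible, hence constraint-satisfying, trajectory.

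For exponential stability I would establish quadratic bounds on $J_L^*$ and a suitable decrease. The upper bound $J_L^*(x)\leq c_u\lVert x\rVert_2^2$ is Assumption~\ref{ass:quad_upper_bound}. Evaluating the cost of the shift-and-append candidate and using $\ell(u^s,y^s)=0$ yields
\[
J_L^*(x_{t+1})\leq J_L^*(x_t)-\ell(u_t,y_t),
\]
so the cost is non-increasing along the closed loop. The lower bound relies on observability of the minimal realization: since $L\geq n$ by Assumption~\ref{ass:length_n}, the state $x_t$ is reconstructible from the first $n$ predicted samples, giving $\lVert x_t\rVert_2^2\leq c_o\sum_{k=0}^{n-1}(\lVert\bar{u}_k^*(t)\rVert_2^2+\lVert\bar{y}_k^*(t)\rVert_2^2)\leq\frac{c_o}{\lambda_{\min}(Q,R)}J_L^*(x_t)$, i.e. a quadratic lower bound $J_L^*(x_t)\geq c_l\lVert x_t\rVert_2^2$.

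The main obstacle, and the step I would treat most carefully, is upgrading the one-step decrease to an exponential rate, because the single stage cost $\ell(u_t,y_t)$ penalizes only the instantaneous input and output and therefore cannot by itself dominate the full internal state $\lVert x_t\rVert_2^2$. I would resolve this with an $n$-step, IOSS-type argument: telescoping the decrease gives $J_L^*(x_{t+n})\leq J_L^*(x_t)-\sum_{j=0}^{n-1}\ell(u_{t+j},y_{t+j})$, and an observability inequality over $n$ consecutive closed-loop samples bounds $\lVert x_t\rVert_2^2$ by that sum. Combining this with the upper bound $J_L^*(x_t)\leq c_u\lVert x_t\rVert_2^2$ produces $J_L^*(x_{t+n})\leq\hat{\rho}\,J_L^*(x_t)$ for some $\hat{\rho}\in[0,1)$, which, together with the quadratic sandwich bounds on $J_L^*$, yields exponential stability of $x^s=0$ by a standard Lyapunov argument. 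The remaining care is in making the observability/reconstruction constants explicit and verifying that they hold uniformly over $\mathbb{X}_L$.
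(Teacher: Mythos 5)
Your proposal is correct, and for parts (i)--(ii) it follows essentially the same path as the paper: the shift-and-append candidate, representability of the shifted length-$(L+n)$ trajectory through the Hankel matrix via Theorem~\ref{thm:traj_rep} under Assumption~\ref{ass:pe}, and constraint satisfaction as an immediate consequence of feasibility. For part (iii), however, you take a genuinely different route. The paper handles the fact that $\ell$ is only positive \emph{semi}-definite in the state by augmenting the value function with an IOSS Lyapunov function: it sets $V(x)=\gamma W(x)+J_L^*(x)$ with $W(x)=\lVert x\rVert_P^2$ satisfying $W(Ax+Bu)-W(x)\leq-\tfrac12\lVert x\rVert_2^2+c_1\lVert u\rVert_2^2+c_2\lVert y\rVert_2^2$, chooses $\gamma=\lambda_{\min}(Q,R)/\max\{c_1,c_2\}$, and obtains the one-step decrease $V(x_{t+1})-V(x_t)\leq-\tfrac{\gamma}{2}\lVert x_t\rVert_2^2$, with the quadratic lower bound on $V$ coming for free from the $\gamma W$ term. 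You instead work with $J_L^*$ alone: you extract a quadratic lower bound from the first $n$ predicted samples via observability (this is where Assumption~\ref{ass:length_n} enters in your version), telescope the one-step cost decrease over $n$ steps, and use reconstructibility of $x_t$ from $n$ consecutive closed-loop input-output samples to convert the accumulated stage cost into a fixed fraction of $J_L^*(x_t)$, yielding the $n$-step contraction $J_L^*(x_{t+n})\leq\hat{\rho}\,J_L^*(x_t)$. Both detours exploit the same structural fact (observability, hence detectability, of the minimal realization), and both are sound; yours avoids introducing $P$ and the IOSS machinery at the price of only a multi-step decrease plus the intermediate monotonicity of $J_L^*$, whereas the paper's one-step Lyapunov function $V=J_L^*+\gamma W$ is exactly the object it reuses in the robust analysis of Section~\ref{sec:robust}, so the IOSS construction pays for itself later.
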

\begin{proof}
Recursive feasibility (i) and constraint satisfaction (ii) follow from standard MPC arguments, i.e., by defining a candidate solution as the shifted, previously optimal solution and appending zero (compare~\cite{Rawlings09}).\\
\textbf{(iii). Exponential Stability}\\
Denote the standard candidate solution mentioned above by $\bar{u}'(t+1),\bar{y}'(t+1),\alpha'(t+1)$.
The cost of this solution is
\begin{align*}
&J_L(x_{t+1},\alpha'(t+1))\\
&=\sum_{k=0}^{L-1}\ell\left(\bar{u}_k'(t+1),\bar{y}_k'(t+1)\right)
=\sum_{k=1}^{L-1}\ell\left(\bar{u}^*_k(t),\bar{y}^*_k(t)\right)\\
&=J_L^*(x_t)-\ell\left(\bar{u}_0^*(t),\bar{y}_0^*(t)\right).
\end{align*}
Hence, it holds that
\begin{align}\label{eq:thm_eq0_proof1}
J_L^*(x_{t+1})\leq J_L^*(x_t)-\ell\left(\bar{u}_0^*(t),\bar{y}_0^*(t)\right).
\end{align}
Since $x$ is the state of an observable (and hence detectable) minimal realization, there exists a matrix $P\succ0$ such that $W(x)=\lVert x\rVert_P^2$ is an input-output-to-state stability (IOSS) Lyapunov function\footnote{Note that, in~\cite[Section 3.2]{cai2008input}, only strictly proper systems with $y=Cx$ are considered, while we allow for more general systems with $y=Cx+Du$.
The result from~\cite{cai2008input} can be extended to $y=Cx+Du$ by considering a modified $\tilde{B}=B+LD$ in~\cite[Inequality (12)]{cai2008input}.}, which satisfies
\begin{align}\label{eq:thm_eq0_proof2}
W(Ax+Bu)-W(x)\leq-\frac{1}{2}\lVert x\rVert_2^2+c_1\lVert u\rVert_2^2 +c_2\lVert y\rVert_2^2,
\end{align}
for all $x\in\mathbb{R}^n,u\in\mathbb{R}^m,y=Cx+Du$, and for suitable $c_1,c_2>0$~\cite{cai2008input}.
Define the candidate Lyapunov function $V(x)=\gamma W(x)+J_L^*(x)$ for some $\gamma>0$.
Note that $V$ is quadratically lower bounded, i.e., $V(x)\geq\gamma W(x)\geq\gamma\lambda_{\min}(P)\lVert x\rVert_2^2$ for all $x\in\mathbb{X}_L$.
Further, $J_L^*$ is quadratically upper bounded by Assumption~\ref{ass:quad_upper_bound}, i.e., $J_L^*(x)\leq c_u\lVert x\rVert_2^2$ for all $x\in\mathbb{X}_L$.
Hence, we have
\begin{align*}
V(x)=J_L^*(x)+\gamma W(x)\leq\left(c_u+\gamma\lambda_{\max}(P)\right)\lVert x\rVert_2^2,
\end{align*}
for all $x\in\mathbb{X}_L$, i.e., $V$ is quadratically upper bounded.
We consider now
\begin{align*}
\gamma=\frac{\lambda_{\min}(Q,R)}{\max\{c_1,c_2\}}>0.
\end{align*}
Along the closed-loop trajectories, using both~\eqref{eq:thm_eq0_proof1} as well as~\eqref{eq:thm_eq0_proof2}, it holds that
\begin{align*}
V(x_{t+1})-V(x_t)\leq&\>\gamma\left(-\frac{1}{2}\lVert x_t\rVert^2_2+c_1\lVert u_t\rVert^2_2 +c_2\lVert y_t\rVert^2_2\right)\\
&-\lVert u_t\rVert_R^2-\lVert y_t\rVert_Q^2\\
\leq&-\frac{\gamma}{2}\lVert x_t\rVert^2_2.
\end{align*}
%
%Thus, $V$ decays exponentially along the closed-loop trajectories, it is quadratically upper and lower bounded, and it satisfies $V(0)=0$.
%Therefore, i
It follows from standard Lyapunov arguments with Lyapunov function $V$ that the equilibrium $x^s=0$ is exponentially stable with region of attraction $\mathbb{X}_L$.
\end{proof}

The proof of Theorem~\ref{thm:mpc_tec} applies standard arguments from model-based MPC with terminal constraints (compare~\cite{Rawlings09}) to the data-driven system description derived in~\cite{Willems05}, similar to the approaches of~\cite{Yang15,Coulson19} which did however not address closed-loop guarantees. 
To handle the fact that the stage cost $\ell$ is merely positive \emph{semi}-definite in the state, detectability of the stage cost is exploited via an IOSS Lyapunov function~\cite{cai2008input}, similar to~\cite{grimm2005model}.
As we will see in Section~\ref{sec:robust}, this analogy between model-based MPC and the proposed data-driven MPC scheme is only present in the nominal case, where the data is noise-free.
For the more realistic case of noisy output measurements, we develop a robust data-driven MPC scheme and we provide a novel theoretical analysis of the closed loop in Section~\ref{sec:robust}, which is the main contribution of this paper.

\begin{remark}\label{rk:complexity_nominal}
We would like to emphasize the simplicity of the proposed MPC scheme.
Without any prior identification step, a single measured data trajectory can be used directly to set up an MPC scheme for a linear system.
Compared to other learning-based MPC approaches such as~\cite{adetola2011robust,Aswani13,tanaskovic2014adaptive,Berkenkamp17,Zanon19}, which require initial model knowledge as well as an online estimation process, the complexity of~\eqref{eq:term_eq_MPC} is similar to classical MPC schemes, which rely on full model knowledge.
To be more precise, the decision variables $\bar{u}(t),\bar{y}(t)$ can be replaced by $\alpha(t)$ via~\eqref{eq:MPC_model} (using a condensed formulation) and hence, since $\alpha(t)\in\mathbb{R}^{N-L-n+1}$, Problem~\eqref{eq:term_eq_MPC} contains in total $N-L-n+1$ decision variables.
For $u^d$ to be persistently exciting of order $L+2n$, it needs to hold that $N-L-2n+1\geq m(L+2n)$.
Assuming equality, Problem~\eqref{eq:term_eq_MPC} hence has $m(L+2n)+n$ free parameters.
On the contrary, a condensed model-based MPC optimization problem contains $mL$ decision variables for the input trajectory (assuming that state measurements are available).
Thus, the online complexity of the proposed data-driven MPC approach is slightly larger ($2mn+n$ additional decision variables) than that of model-based MPC, but it does not require an a priori (offline) identification step.
It is worth noting that the difference in complexity is independent of the horizon $L$.
Moreover, the proposed data-driven MPC is inherently an output-feedback controller since no state measurements are required for its implementation.
Finally, as in model-based MPC, for convex polytopic (or quadratic) constraints $\mathbb{U},\mathbb{Y}$,~\eqref{eq:term_eq_MPC} is a convex (quadratically constrained) quadratic program which can be solved efficiently.
\end{remark}
%!TEX root = ./DD_MPC.tex
\section{Robust data-driven MPC}\label{sec:robust}
In this section, we propose a multi-step robust data-driven MPC scheme and we prove practical exponential stability of the closed loop in the presence of bounded additive output measurement noise.
The scheme includes a slack variable, which is regularized in the cost and compensates noise both in the initial data $(u^d,y^d)$ used for prediction and in the online measurement updates $\left(u_{[t-n,t-1]},y_{[t-n,t-1]}\right)$. 
Section~\ref{sec:scheme} contains the scheme, which is essentially a robust modification of the nominal scheme of Section~\ref{sec:tec}, as well as detailed explanations of the key ingredients.
In Sections~\ref{sec:Lyapunov_bound} and~\ref{sec:prediction_error}, we prove two technical Lemmas, which will be required for our main theoretical results.
Recursive feasibility of the closed loop is proven in Section~\ref{sec:robust_feasibility}.
In Section~\ref{sec:robust_stability}, we show that, under suitable assumptions, the closed loop resulting from the application of the multi-step MPC scheme leads to a practically exponentially stable closed loop.
Moreover, if the noise bound tends to zero, then the region of attraction of the closed loop approaches the set of all initially feasible points.
In this section, we do not consider output constraints, i.e., $\mathbb{Y}=\mathbb{R}^p$.
In~\cite{berberich2020constraints}, we recently extended the results of this section by incorporating tightened output constraints in order to guarantee closed-loop constraint satisfaction despite noisy data.

\subsection{Robust MPC scheme}\label{sec:scheme}

In practice, the output of the unknown LTI system $G$ is usually not available exactly, but might be subject to measurement noise.
This implies that the stacked data-dependent Hankel matrices in~\eqref{eq:thm_hankel} do not span the system's trajectory space exactly and thus, the output trajectories cannot be predicted accurately.
Moreover, noisy output measurements enter the initial conditions in Problem~\eqref{eq:term_eq_MPC}, which deteriorates the prediction accuracy even further.
Therefore, a direct application of the MPC scheme of Section~\ref{sec:tec} may lead to feasibility issues or it may render the closed loop unstable.
In this section, we tackle the issue of noisy measurements with a robust data-driven MPC scheme with terminal constraints.
We consider output measurements with bounded additive noise in the initially available data $\tilde{y}_k^d=y_k^d+\varepsilon_k^d$ as well as in the online measurements $\tilde{y}_k=y_k+\varepsilon_k$.
We make no assumptions on the nature of the noise, but we require that it is bounded as $\lVert\varepsilon_k^d\rVert_{\infty}\leq\bar{\varepsilon}$ and $\lVert\varepsilon_k\rVert_{\infty}\leq\bar{\varepsilon}$ for some $\bar{\varepsilon}>0$. 
Thus, the present setting includes two types of noise.
The data used for the prediction via the Hankel matrices in~\eqref{eq:thm_hankel} is perturbed by $\varepsilon^d$, which can thus be interpreted as a multiplicative model uncertainty.
On the other hand, $\varepsilon$ perturbs the online measurements and hence, the overall control goal is a noisy output-feedback problem.

The key idea to account for noisy measurements is to relax the equality constraint~\eqref{eq:MPC_model}, where the relaxation parameter is penalized appropriately in the cost function.
Given a noisy initial input-output trajectory $\left(u_{[t-n,t-1]},\tilde{y}_{[t-n,t-1]}\right)$ of length $n$, and noisy data $(u^d,\tilde{y}^d)$, we propose the following robust modification of~\eqref{eq:term_eq_MPC}.
\begin{subequations}\label{eq:robust_MPC}
\begin{align}\nonumber
J_L^*&\big(u_{[t-n,t-1]},\tilde{y}_{[t-n,t-1]}\big)=\\\nonumber
\underset{\substack{\alpha(t),\sigma(t)\\\bar{u}(t),\bar{y}(t)}}{\min}&\sum_{k=0}^{L-1}\ell\left(\bar{u}_k(t),\bar{y}_k(t)\right)+\lambda_\alpha\bar{\varepsilon}\lVert\alpha(t)\rVert_2^2+\lambda_\sigma\lVert\sigma(t)\rVert_2^2\\
\label{eq:robust_MPC1} s.t.\>\> &\>\begin{bmatrix}
\bar{u}(t)\\\bar{y}(t)+\sigma(t)\end{bmatrix}=\begin{bmatrix}H_{L+n}\left(u^d\right)\\H_{L+n}\left(\tilde{y}^d\right)\end{bmatrix}\alpha(t),\\\label{eq:robust_MPC2}
&\>\begin{bmatrix}\bar{u}_{[-n,-1]}(t)\\\bar{y}_{[-n,-1]}(t)\end{bmatrix}=\begin{bmatrix}u_{[t-n,t-1]}\\\tilde{y}_{[t-n,t-1]}\end{bmatrix},\\\label{eq:robust_MPC3}
&\>\begin{bmatrix}\bar{u}_{[L-n,L-1]}(t)\\\bar{y}_{[L-n,L-1]}(t)\end{bmatrix}=\begin{bmatrix}u^s_n\\y^s_n\end{bmatrix},\>\>\bar{u}_k(t)\in\mathbb{U},\\
%\label{eq:robust_MPC4}
%&\>\bar{y}_k(t)\in\mathbb{Y}_k,\>\>k\in\mathbb{I}_{[0,L-1]},\\
\label{eq:robust_MPC5}
&\>\lVert\sigma_k(t)\rVert_\infty\leq\bar{\varepsilon}\left(1+\lVert\alpha(t)\rVert_1\right),\>\>k\in\mathbb{I}_{[0,L-1]}.
\end{align}
\end{subequations}
Compared to the nominal MPC problem~\eqref{eq:term_eq_MPC}, the output data trajectory $\tilde{y}^d$ as well as the initial output $\tilde{y}_{[t-n,t-1]}$, which is obtained via online measurements, have been replaced by their noisy counterparts.
Further, the following ingredients have been added:
\begin{itemize}
\item[a)] A slack variable $\sigma$, bounded by~\eqref{eq:robust_MPC5}, to account for the noisy online measurements $\tilde{y}_{[t-n,t-1]}$ and for the noisy data $\tilde{y}^d$ used for prediction, which can be interpreted as a multiplicative model uncertainty,
\item[b)] Quadratic regularization (i.e., \emph{ridge regularization}) of $\alpha$ and $\sigma$ with weights $\lambda_\alpha\bar{\varepsilon},\lambda_\sigma>0$, i.e., the regularization of $\alpha$ depends on the noise level.
\end{itemize}
The above $\ell_2$-norm regularization for $\alpha(t)$ implies that small values of $\lVert\alpha(t)\rVert_2^2$ are preferred.
Since the noisy Hankel matrix $H_{L+n}\left(\tilde{y}^d\right)$ is multiplied by $\alpha(t)$ in~\eqref{eq:robust_MPC1}, this implicitly reduces the influence of the noise on the prediction accuracy.
Intuitively, for increasing $\lambda_\alpha$, the term $\lambda_\alpha\bar{\varepsilon}\lVert\alpha(t)\rVert_2^2$ reduces the ``complexity'' of the data-driven system description~\eqref{eq:robust_MPC1}, similar to regularization methods in linear regression, thus allowing for a tradeoff between tracking performance and the avoidance of overfitting.
The term $\lambda_\sigma\lVert\sigma(t)\rVert_2^2$ yields small values for the slack variable $\sigma(t)$, thus improving the prediction accuracy.
For our theoretical results, $\lambda_\sigma$ can be chosen to be zero since $\sigma(t)$ is already rendered small by the constraint~\eqref{eq:robust_MPC5}.
However, as we discuss in more detail in Remark~\ref{rk:sigma_bound}, the constraint~\eqref{eq:robust_MPC5} is non-convex but can be neglected if $\lambda_\sigma$ is large enough.

An alternative to the present regularization terms are general quadratic regularization kernels, i.e., costs of the form $\lVert\alpha(t)\rVert_{P_\alpha}^2$, $\lVert\sigma(t)\rVert_{P_\sigma}^2$ for suitable matrices $P_\alpha,P_\sigma\succ0$.
Further, in~\cite{Coulson19,Coulson19b}, $\ell_1$-regularizations of $\alpha$ and $\sigma$ were suggested and the resulting MPC scheme, without terminal equality constraints, was successfully applied to a nonlinear stochastic control problem.
However, theoretical guarantees on closed-loop stability were not given.
%Finally, the use of $\ell_1$ stage cost functions $\ell(u,y)$ may be beneficial in certain practical applications... and the analysis is simpler?
Throughout this paper, we consider simple quadratic penalty terms since this simplifies the arguments, but we conjecture that our theoretical results remain to hold for general norms $\lVert\alpha(t)\rVert_p,\lVert\sigma(t)\rVert_q$ with arbitrary $p,q=1,\dots,\infty$.
An interesting open question, which is beyond the scope of this paper, is to investigate the impact of particular choices of regularization norms on the practical performance of the presented MPC approach.
The choice of norms in the constraint~\eqref{eq:robust_MPC5} is independent of the norms in the cost and essentially follows from the $\ell_\infty$-noise bound and the proofs of the value function upper bound (Lemma~\ref{lem:value_fcn_upper_bound}) and recursive feasibility (Proposition~\ref{prop:robust_rec_feas}).

In this section, we study the closed loop resulting from an application of~\eqref{eq:robust_MPC} in an $n$-step MPC scheme (compare~\cite{Gruene15,Worthmann17}).
To be more precise, we consider the scenario that, after solving~\eqref{eq:robust_MPC} online, the first $n$ computed inputs are applied to the system.
Thereafter, the horizon is shifted by $n$ steps, before the whole scheme is repeated (compare Algorithm~\ref{alg:MPC_n_step}).

\begin{algorithm}
\begin{Algorithm}\label{alg:MPC_n_step}
\normalfont{\textbf{$n$-Step Data-Driven MPC Scheme}}
\begin{enumerate}
\item At time $t$, take the past $n$ measurements $u_{[t-n,t-1]}$, $\tilde{y}_{[t-n,t-1]}$ and solve~\eqref{eq:robust_MPC}.
\item Apply the input sequence $u_{[t,t+n-1]}=\bar{u}_{[0,n-1]}^*(t)$ over the next $n$ time steps.
\item Set $t=t+n$ and go back to 1).
\end{enumerate}
\end{Algorithm}
\end{algorithm}

%In model-based MPC, multi-step MPC schemes can have superior theoretical properties.
As we will see in the remainder of this section, for the considered setting with output measurement noise, the multi-step MPC scheme described in Algorithm~\ref{alg:MPC_n_step} has superior theoretical properties compared to its corresponding $1$-step version.
This is mainly due to the terminal equality constraints~\eqref{eq:robust_MPC3}, which complicate the proof of recursive feasibility, similar as in model-based robust MPC with terminal equality constraints and model mismatch.
In particular, we show in this section that, for an $n$-step MPC scheme with a terminal equality constraint, practical exponential stability can be proven.
On the other hand, we comment on the differences for the corresponding $1$-step MPC scheme in Section~\ref{sec:robust_feasibility} (Remark~\ref{rk:one_step}).
In particular, for a $1$-step MPC scheme relying on~\eqref{eq:robust_MPC}, recursive feasibility holds only locally around $(u^s,y^s)$ and thus, only local stability can be guaranteed.
Nevertheless, as we will see in Section~\ref{sec:example} for a numerical example, the practical performance of the $n$-step scheme is almost indistinguishable from the $1$-step scheme.

\begin{remark}
In the nominal case of Section~\ref{sec:tec}, i.e., for $\bar{\varepsilon}=0$,~\eqref{eq:robust_MPC5} implies $\sigma=0$.
Further, the regularization of $\alpha$ vanishes for $\bar{\varepsilon}=0$, and the system dynamics~\eqref{eq:robust_MPC1} as well as the initial conditions~\eqref{eq:robust_MPC2} approach their nominal counterparts.
Thus, for $\bar{\varepsilon}=0$, Problem~\eqref{eq:robust_MPC} reduces to the nominal Problem~\eqref{eq:term_eq_MPC}.
\end{remark}

\begin{remark}\label{rk:sigma_bound}
If the constraint~\eqref{eq:robust_MPC5} is neglected and the input constraint set $\mathbb{U}$ is a convex polytope, then Problem~\eqref{eq:robust_MPC} is a strictly convex quadratic program and can be solved efficiently.
However, the constraint on the slack variable $\sigma$ in~\eqref{eq:robust_MPC5} is non-convex due to the dependence of the right-hand side on $\lVert\alpha(t)\rVert_1$, making it difficult to implement~\eqref{eq:robust_MPC} in an efficient way.
As will become clear later in this section,~\eqref{eq:robust_MPC5} is required to prove recursive feasibility and practical exponential stability.
It may, however, be replaced by the (convex) constraint $\lVert\sigma_k(t)\rVert_\infty\leq c\cdot\bar{\varepsilon}$ for a sufficiently large constant $c>0$, retaining the same theoretical guarantees.
Generally, a larger choice of $c$ increases the region of attraction, but also the size of the exponentially stable set to which the closed loop converges.
Furthermore, the constraint~\eqref{eq:robust_MPC5} can be enforced implicitly by choosing $\lambda_\sigma$ large enough.
In simulation examples, it was observed that the constraint~\eqref{eq:robust_MPC5} is usually satisfied (for suitably large choices of $\lambda_\sigma$) without enforcing it explicitly in the optimization problem and thus, it may in most cases be neglected in the online optimization.
\end{remark}

As in the previous section, we require that the measured input $u^d$ is persistently exciting of order $L+2n$ (Assumption~\ref{ass:pe}).
Further, to establish a local upper bound on the optimal cost of~\eqref{eq:robust_MPC} and to prove recursive feasibility, we require that the horizon $L$ is not shorter than twice the system's order, as captured in the following assumption.

\begin{assumption}\label{ass:length_2n}
The prediction horizon satisfies $L\geq2n$.
\end{assumption}

In some minimal realization, we denote the state trajectory corresponding to $(u^d,y^d)$ by $x^d$.
According to~\cite[Corollary 2]{Willems05}, Assumption~\ref{ass:pe} implies that the matrix
\begin{align}\label{eq:ass_pe_matrix}
H_{ux}=\begin{bmatrix}H_{L+n}\left(u^d\right)\\H_1\left(x^d_{[0,N-L-n]}\right)\end{bmatrix}
\end{align}
has full row rank and thus admits a right-inverse $H_{ux}^\dagger=H_{ux}^\top\left(H_{ux}H_{ux}^\top\right)^{-1}$.
Define the quantity
\begin{align}
c_{pe}\coloneqq\left\lVert H_{ux}^\dagger\right\rVert_2^2.
\end{align}
For our stability results, we will require that $c_{pe}\bar{\varepsilon}$ is bounded from above by a sufficiently small number.
Essentially, this corresponds to a quantitative ``persistence-of-excitation-to-noise''-bound.
To be more precise, abbreviate in the following $U=H_{L+n}(u^d)$ and suppose that
\begin{align}\label{eq:ass_pe_quantitative}
\rho I_{m(L+n)}\preceq UU^\top\preceq\nu I_{m(L+n)}
\end{align}
for scalar constants $\rho,\nu>0$.
Further, define the quantity $c_{pe}^u=\lVert U^\dagger\rVert_2^2=\lVert U^\top(UU^\top)^{-1}\rVert_2^2$.
Then, it holds that
\begin{align}\label{eq:ass_pe_bound}
c_{pe}^u&\leq \left\lVert U^\top\right\rVert_2^2\left\lVert \left(UU^\top\right)^{-1}\right\rVert_2^2\\\nonumber
&=\lambda_{\max}(UU^\top)\cdot\lambda_{\max}\left((UU^\top)^{-1}(UU^\top)^{-1}\right)\\\nonumber
&\leq\frac{\lambda_{\max}(UU^\top)}{\lambda_{\min}(UU^\top)^2} \stackrel{\eqref{eq:ass_pe_quantitative}}{\leq}\frac{\nu}{\rho^2}.
\end{align}
Thus, if a persistently exciting input $u^d$ is multiplied by a constant $c>1$, then $c_{pe}^u$ decreases proportionally to $\frac{1}{c^2}$.
Further, the constant $\rho$ can typically be chosen larger if the data length $N$ increases.
The same arguments can be carried out when assuming a bound of the form~\eqref{eq:ass_pe_quantitative} for the matrix~\eqref{eq:ass_pe_matrix}, but finding a suitable input which generates data achieving such a bound is less obvious.
It is well-known for classical definitions of persistence of excitation that larger excitation of the input implies larger excitation of the state.
Therefore, we conjecture (and we have observed for various practical simulation examples) that $c_{pe}$ decreases with increasing data horizons $N$ and with multiplications of a persistently exciting input data trajectory $u^d$ by a scalar constant greater than one.
This means that, for a given noise level $\bar{\varepsilon}$, robust stability as guaranteed in the following sections can be obtained by choosing a large enough persistently exciting input $u^d$ and/or a sufficiently large data horizon $N$.

Similar to Section~\ref{sec:tec}, we denote the open-loop cost of the robust MPC problem~\eqref{eq:robust_MPC} by $J_L\left(u_{[t-n,t-1]},\tilde{y}_{[t-n,t-1]},\alpha(t),\sigma(t)\right)$, and the optimal cost by $J_L^*\left(u_{[t-n,t-1]},\tilde{y}_{[t-n,t-1]}\right)$.
Moreover, we assume for the analysis that $(u^s,y^s)=(0,0)$.
For the presented robust data-driven MPC scheme, setpoints $(u^s,y^s)\neq (0,0)$ change mainly one quantitative constant in Lemma~\ref{lem:value_fcn_upper_bound}.
We comment on the main differences in the case $(u^s,y^s)\neq(0,0)$ in Section~\ref{sec:robust_feasibility} (Remark~\ref{rk:wlog_zero}).

\subsection{Local upper bound of Lyapunov function}\label{sec:Lyapunov_bound}

In this section, we show that the optimal cost of~\eqref{eq:robust_MPC} admits a quadratic upper bound, similar to the nominal case (cf. Assumption~\ref{ass:quad_upper_bound}).
It is straightforward to see that such an upper bound can not be quadratic in the state $x$ of some minimal realization:
the optimal cost $J_L^*$ depends explicitly on $\alpha^*(t)$ via $\lambda_\alpha\bar{\varepsilon}\lVert\alpha^*(t)\rVert_2^2$, which in turn depends on the past $n$ inputs and outputs $(u_{[t-n,t-1]},y_{[t-n,t-1]})$ through~\eqref{eq:robust_MPC1} and~\eqref{eq:robust_MPC2}. 
Even if the current state is zero, i.e., $x_t=0$, these may in general be arbitrarily large and hence, $\alpha$ and therefore also $J_L^*$ may be arbitrarily large.
Thus, $J_L^*$ does not admit an upper bound in the state $x_t$ of a minimal realization.
To overcome this issue, we consider a different (not minimal) state of the system, defined as
\begin{align*}
\xi_t\coloneqq\begin{bmatrix}u_{[t-n,t-1]}\\y_{[t-n,t-1]}\end{bmatrix}.
\end{align*}
Further, we define the noisy version of $\xi$ as
\begin{align*}
\tilde{\xi}_t\coloneqq\begin{bmatrix}u_{[t-n,t-1]}\\\tilde{y}_{[t-n,t-1]}\end{bmatrix}
=\begin{bmatrix}u_{[t-n,t-1]}\\y_{[t-n,t-1]}+\varepsilon_{[t-n,t-1]}\end{bmatrix}.
\end{align*}
Denote the (not invertible) linear transformation from $\xi$ to an arbitrary but fixed state $x$ in some minimal realization by $T$, i.e., $x_t=T\xi_t$.
Clearly, this implies $\lVert x_t\rVert_2^2\leq\lVert T\rVert_2^2\lVert\xi_t\rVert_2^2\eqqcolon\Gamma_x\lVert\xi_t\rVert_2^2$.
Note that $\xi$ is the state of a detectable state-space realization and thus, there exists an IOSS Lyapunov function $W(\xi)=\lVert \xi\rVert_P^2$, similar to the proof of Theorem~\ref{thm:mpc_tec}.
For some $\gamma>0$, define $V_t\coloneqq J_L^*(\tilde{\xi}_t)+\gamma W(\xi_t)$.
The following result shows that, for the state $\xi$, a meaningful quadratic upper bound on $V$ can be proven.

\begin{lemma}\label{lem:value_fcn_upper_bound}
Suppose Assumptions~\ref{ass:pe} and~\ref{ass:length_2n} hold.
Then, there exists a constant $c_3>0$ as well as a $\delta>0$ such that, for all $\xi_t\in \mathbb{B}_\delta$, Problem~\eqref{eq:robust_MPC} is feasible and $V$ is bounded as
\begin{align}\label{eq:lem:value_fcn_bound}
\gamma\lambda_{\min}(P)\lVert \xi_t\rVert^2_2\leq V_t\leq c_3\lVert \xi_t\rVert_2^2+c_4,
\end{align}
where $c_4=2np\bar{\varepsilon}^2\lambda_\sigma$.
\end{lemma}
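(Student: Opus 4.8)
The plan is to prove the lower and upper bounds in~\eqref{eq:lem:value_fcn_bound} separately, with the feasibility claim serving as a prerequisite for the upper bound. The lower bound is immediate: by construction $V_t = J_L^*(\tilde{\xi}_t) + \gamma W(\xi_t) \geq \gamma W(\xi_t) = \gamma\lVert\xi_t\rVert_P^2 \geq \gamma\lambda_{\min}(P)\lVert\xi_t\rVert_2^2$, since the stage-cost sum and the regularization terms in the objective of~\eqref{eq:robust_MPC} are all nonnegative. The real work is the upper bound, which requires me to \emph{construct} a feasible candidate solution $(\alpha,\sigma,\bar{u},\bar{y})$ for Problem~\eqref{eq:robust_MPC} whenever $\xi_t\in\mathbb{B}_\delta$, and then to bound its open-loop cost quadratically in $\lVert\xi_t\rVert_2$ plus the residual constant $c_4$.

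First I would construct the candidate. The natural idea is to build the nominal trajectory that drives the true (noise-free) initial state $x_t = T\xi_t$ to the equilibrium $x^s=0$ in exactly $L$ steps while respecting the terminal constraint~\eqref{eq:robust_MPC3}. Since $L\geq 2n$ (Assumption~\ref{ass:length_2n}) and the minimal realization is controllable, such a steering input exists, and for $\xi_t$ small enough (i.e.\ $\lVert\xi_t\rVert_2\leq\delta$) the resulting input trajectory stays in $\mathrm{int}(\mathbb{U})$, since $(u^s,y^s)\in\mathrm{int}(\mathbb{U}\times\mathbb{Y})$ and the steering map is linear in $x_t$. This handles the input constraint in~\eqref{eq:robust_MPC3} and fixes $\delta$. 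I then use the persistence-of-excitation machinery of Theorem~\ref{thm:traj_rep} and the full-row-rank matrix $H_{ux}$ from~\eqref{eq:ass_pe_matrix} to select a specific $\alpha$ realizing this nominal trajectory: concretely $\alpha = H_{ux}^\dagger\begin{bmatrix}\bar{u}\\x_t\end{bmatrix}$, so that $\lVert\alpha\rVert_2^2\leq c_{pe}\bigl\lVert\begin{bmatrix}\bar{u}\\x_t\end{bmatrix}\bigr\rVert_2^2$, which is quadratic in $\lVert\xi_t\rVert_2$ via the steering map and $\Gamma_x$. The slack $\sigma$ is then \emph{forced} to equal the mismatch $H_{L+n}(\tilde{y}^d)\alpha - H_{L+n}(y^d)\alpha$ on the noise-free part plus the initial-condition noise, i.e.\ $\sigma = (H_{L+n}(\tilde{y}^d)-H_{L+n}(y^d))\alpha$ adjusted so that~\eqref{eq:robust_MPC1} and~\eqref{eq:robust_MPC2} hold with the noisy data. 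Because each entry of $\tilde{y}^d - y^d = \varepsilon^d$ is bounded by $\bar{\varepsilon}$ in $\ell_\infty$, the componentwise bound $\lVert\sigma_k\rVert_\infty\leq\bar{\varepsilon}(1+\lVert\alpha\rVert_1)$ in~\eqref{eq:robust_MPC5} is satisfied by this choice, confirming feasibility.

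Next I would bound the cost of this candidate. The stage-cost sum $\sum_{k=0}^{L-1}\ell(\bar{u}_k,\bar{y}_k)$ is quadratic in the nominal trajectory, hence quadratic in $x_t$ and therefore in $\lVert\xi_t\rVert_2^2$ (absorbing $\Gamma_x$), with the caveat that $\bar{y}_k$ here is the \emph{predicted} output which differs from the true one by the injected noise — this contributes additional $\bar{\varepsilon}$-dependent terms. The regularization term $\lambda_\alpha\bar{\varepsilon}\lVert\alpha\rVert_2^2$ is already bounded by $\lambda_\alpha\bar{\varepsilon}\,c_{pe}\cdot(\text{quadratic in }\lVert\xi_t\rVert_2)$, and since $\bar{\varepsilon}$ is a fixed constant this folds into the $c_3\lVert\xi_t\rVert_2^2$ term. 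The crucial residual is $\lambda_\sigma\lVert\sigma\rVert_2^2$: using $\lVert\sigma\rVert_2^2 = \sum_k\lVert\sigma_k\rVert_2^2$ and the per-component bound, the noise-independent part of $\sigma$ (the constant ``$1$'' in the bound) over $L$ entries of dimension $p$ yields the irreducible constant, and one checks that the $n$ past output components carrying direct measurement noise contribute the stated $c_4 = 2np\bar{\varepsilon}^2\lambda_\sigma$. Combining the nominal stage cost, the $\alpha$-regularization, and the $\lVert\xi_t\rVert_2$-proportional part of the slack penalty into a single constant $c_3$ gives $V_t \leq c_3\lVert\xi_t\rVert_2^2 + c_4$.

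The main obstacle I anticipate is the careful bookkeeping of the slack variable $\sigma$ and, in particular, isolating the purely noise-induced constant $c_4$ from the state-dependent part. The difficulty is that $\sigma$ must simultaneously absorb two distinct noise sources — the multiplicative uncertainty $(\tilde{y}^d - y^d)$ entering through $H_{L+n}(\tilde{y}^d)\alpha$ and the additive initial-condition noise $\varepsilon_{[t-n,t-1]}$ in~\eqref{eq:robust_MPC2} — while still satisfying the nonconvex bound~\eqref{eq:robust_MPC5}. Getting the constant $c_4 = 2np\bar{\varepsilon}^2\lambda_\sigma$ exactly right (the factor $2np$ reflecting the $n$ noisy initial-output components of dimension $p$, doubled by the combination of the two noise channels) requires splitting $\sigma$ into the part that scales with $\lVert\alpha\rVert_1$ and hence with $\lVert\xi_t\rVert_2$ versus the part bounded by a pure $\bar{\varepsilon}$ constant, and this separation is where the $\ell_\infty$-versus-$\ell_2$ norm juggling in the constraint~\eqref{eq:robust_MPC5} becomes delicate.
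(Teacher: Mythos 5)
Your proposal is correct and follows essentially the same route as the paper: a deadbeat candidate built from controllability (steering the true state to the origin within the first $L-n$ steps so that~\eqref{eq:robust_MPC3} holds, with $\delta$ chosen so the input stays in $\mathbb{U}$), $\alpha$ selected via the right-inverse of $H_{ux}$ with the $c_{pe}$-bound, $\sigma$ chosen to absorb both the Hankel-matrix noise $H_{L+n}(\varepsilon^d)\alpha$ and the online noise $\varepsilon_{[t-n,t-1]}$, feasibility of~\eqref{eq:robust_MPC5} via the entrywise $\ell_\infty$ estimate, and $c_4$ isolated from the $n$ noisy initial-output components. The only cosmetic difference is that the paper anchors $\alpha$ to the state $x_{t-n}$ at the start of the $(L+n)$-window (recovered from $\xi_t$ via observability) rather than $x_t$, and takes $\bar{y}_{[0,L-1]}(t)$ to be the exact noise-free output so that the stage cost carries no extra $\bar{\varepsilon}$-terms; neither changes the argument.
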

\begin{proof}
The lower bound is trivial.
For the upper bound, we construct a feasible candidate solution to Problem~\eqref{eq:robust_MPC} which brings the state $x$ in some minimal realization (and thus the output $y$) to zero in $L$ steps.
Obviously, we have $\bar{u}_{[-n,-1]}(t)=u_{[t-n,t-1]}$ as well as $\bar{y}_{[-n,-1]}(t)=\tilde{y}_{[t-n,t-1]}$ by~\eqref{eq:robust_MPC2}.
By assumption, we have $L\geq2n$ as well as $0\in\text{int}(\mathbb{U})$.
Thus, by controllability, there exists a $\delta>0$ such that for any $x_t$ with $\frac{1}{\Gamma_x}\lVert x_t\rVert_2\leq\lVert\xi_t\rVert_2\leq\delta$, there exists an input trajectory $u_{[t,t+L-1]}\in\mathbb{U}^L$, which brings the state $x_{[t,t+L-1]}$ and the corresponding output $y_{[t,t+L-1]}$ to the origin in $L-n$ steps while satisfying 
%$y_{[t,t+L-1]}\in\text{int}\left(\mathbb{Y}^L\right)$ as well as
%
\begin{align}\label{eq:thm_robust_proof_control}
\left\lVert\begin{bmatrix}u_{[t,t+L-1]}
\\y_{[t,t+L-1]}\end{bmatrix}
\right\rVert_2^2&\leq \Gamma_{uy}\lVert x_t\rVert_2^2
\end{align}
for a suitable constant $\Gamma_{uy}>0$.
As candidate input-output trajectories for~\eqref{eq:robust_MPC}, we choose these $u,y$, i.e., $\bar{u}_{[0,L-1]}(t)=u_{[t,t+L-1]},\bar{y}_{[0,L-1]}(t)=y_{[t,t+L-1]}$.
Moreover, $\alpha(t)$ is chosen as
\begin{align}\label{eq:lem_value_fcn_bound_alpha}
\alpha(t)=H_{ux}^\dagger\begin{bmatrix}u_{[t-n,t+L-1]}\\x_{t-n}
\end{bmatrix},
\end{align}
where $H_{ux}$ is defined in~\eqref{eq:ass_pe_matrix}.
As is described in more detail in~\cite{Markovsky08,Berberich19}, the output of an LTI system is a linear combination of its initial condition and the input, and therefore, the above choice of $\alpha(t)$ implies
\begin{align*}
\begin{bmatrix}H_{L+n}\left(u^d\right)\\H_{L+n}\left(y^d\right)\end{bmatrix}\alpha(t)&=
\begin{bmatrix}\bar{u}_{[-n,L-1]}(t)\\y_{[t-n,t+L-1]}\end{bmatrix}\\
&=
\begin{bmatrix}\bar{u}_{[-n,L-1]}(t)\\
\bar{y}_{[-n,-1]}(t)-\varepsilon_{[t-n,t-1]}
\\\bar{y}_{[0,L-1]}(t)\end{bmatrix},
\end{align*}
where $\varepsilon_{[t-n,t-1]}$ is the true noise instance.
For the slack variable $\sigma$, we choose 
\begin{align}\label{eq:lem_1_sigma}
\begin{split}
\sigma_{[-n,-1]}(t)&=H_n\left(\varepsilon^d_{[0,N-L-1]}\right)\alpha(t)-\varepsilon_{[t-n,t-1]},\\
\sigma_{[0,L-1]}(t)&=H_L\left(\varepsilon^d_{[n,N-1]}\right)\alpha(t),
\end{split}
\end{align}
which implies that~\eqref{eq:robust_MPC1}-\eqref{eq:robust_MPC3} are satisfied.
Finally, writing $e_i$ for a row vector whose $i$-th component is equal to $1$ and which is zero otherwise, we obtain
\begin{align}\label{eq:infty_bound}
\begin{split}
\lVert H_{L+n}(\varepsilon^d)\alpha(t)\rVert_\infty&=\max_{i\in\mathbb{I}_{[1,p(L+n)]}}|e_iH_{L+n}(\varepsilon^d)\alpha(t)|\\
&\leq\bar{\varepsilon}\lVert\alpha(t)\rVert_1.
\end{split}
\end{align}
This implies $\lVert\sigma(t)\rVert_\infty\leq\bar{\varepsilon}\left(\lVert\alpha(t)\rVert_1+1\right)$, which in turn proves that~\eqref{eq:robust_MPC5} is satisfied.

In the following, we employ the above candidate solution to bound the optimal cost and thereby, the function $V$.
Due to observability of the pair $(A,C)$, corresponding to the minimal realization with state $x$, it holds that
\begin{align}\label{eq:lem_value_fcn_bound_ctrb}
\begin{bmatrix}\bar{u}_{[-n,-1]}(t)\\x_{t-n}\end{bmatrix}=
\underbrace{\begin{bmatrix}
I_{mn}&0\\M_1&\Phi_\dagger
\end{bmatrix}}_{M\coloneqq}\xi_t,
\end{align}
where $\Phi_\dagger=(\Phi^\top\Phi)^{-1}\Phi^\top$ is a left-inverse of the observability matrix $\Phi$.
The lower block of~\eqref{eq:lem_value_fcn_bound_ctrb} follows from observability and the linear system dynamics $x_{k+1}=Ax_k+Bu_k,\>y_k=Cx_k+Du_k$ for $k\in\mathbb{I}_{[t-n,t-1]}$, which can be used to compute the matrix $M_1$ depending on $A,B,C,D$.
%Note that $M$ has full rank and $\lVert M\rVert_2>0$.
Hence, $\alpha(t)$ can be bounded as
\begin{align}\nonumber
\lVert&\alpha(t)\rVert_2^2\stackrel{\eqref{eq:lem_value_fcn_bound_alpha}}{\leq}\left\lVert H_{ux}^\dagger\right\rVert_2^2
\left(\left\lVert\bar{u}_{[-n,L-1]}(t)\right\rVert_2^2
+\left\lVert x_{t-n}\right\rVert_2^2\right)\\\nonumber
&=\left\lVert H_{ux}^\dagger\right\rVert_2^2
\left(\left\lVert\bar{u}_{[0,L-1]}(t)\right\rVert_2^2
+\left\lVert\begin{bmatrix}\bar{u}_{[-n,-1]}(t)\\x_{t-n}\end{bmatrix}\right\rVert_2^2\right)\\\label{eq:lem_value_fcn_bound_3}
&\stackrel{\eqref{eq:thm_robust_proof_control},\eqref{eq:lem_value_fcn_bound_ctrb}}{\leq}\underbrace{\left\lVert H_{ux}^\dagger\right\rVert_2^2}_{c_{pe}=}
\left(\Gamma_{uy}\lVert x_t\rVert_2^2+\lVert M\rVert_2^2\lVert\xi_t\rVert_2^2\right).
\end{align}
Using standard norm equivalence properties, it holds for arbitrary $k\in\mathbb{N}$ that
\begin{align}\label{eq:lem_constant}
\left\lVert H_{k}\left(\varepsilon^d_{[0,N-L-n+k-1]}\right)\right\rVert_2^2
\leq c_5k\bar{\varepsilon}^2,
\end{align}
where $c_5\coloneqq p(N-L-n+1)$.
Based on the definition of $\sigma(t)$ in~\eqref{eq:lem_1_sigma}, and using~\eqref{eq:lem_constant} as well as the inequality $(a+b)^2\leq2(a^2+b^2)$, we can bound $\sigma(t)$ in terms of $\alpha(t)$ as
\begin{align}\label{eq:lem_bound_sigma}
\lVert\sigma(t)\rVert_2^2&\leq 2np\bar{\varepsilon}^2+c_5(L+2n)\bar{\varepsilon}^2\lVert\alpha(t)\rVert_2^2.
\end{align}
Combining the above inequalities, $V$ is upper bounded as
\begin{align}
\nonumber
&V_t\leq J_L(\tilde{\xi}_t,\alpha(t),\sigma(t))+\gamma W(\xi_t)\\
\nonumber
&\leq
\lambda_{\max}(Q,R)\Gamma_{uy}\lVert x_t\rVert_2^2+\gamma\lambda_{\max}(P)\lVert \xi_t\rVert_2^2\\
\nonumber
&+\left(\lambda_\alpha+c_5(L+2n)\lambda_\sigma\bar{\varepsilon}\right)c_{pe}\bar{\varepsilon}\left(\Gamma_{uy}\lVert x_t\rVert_2^2+\lVert M\rVert_2^2\lVert\xi_t\rVert_2^2\right)\\\nonumber
&+2np\bar{\varepsilon}^2\lambda_\sigma.
%\label{eq:lemma_proof_upper_bound}
\end{align}
Finally, $x_t$ is bounded by $\xi_t$ as $\lVert x_t\rVert_2^2\leq\Gamma_x\lVert\xi_t\rVert_2^2$, which leads to $V_t\leq c_3\lVert \xi_t\rVert_2^2 +c_4$, where 
\begin{align*}
c_3&=\lambda_{\max}(Q,R)\Gamma_{uy}\Gamma_x+\gamma\lambda_{\max}(P)\\
&+\left(\lambda_\alpha+c_5(L+2n)\lambda_\sigma\bar{\varepsilon}\right)c_{pe}\bar{\varepsilon}\left(\Gamma_{uy}\Gamma_x+\lVert M\rVert_2^2\right),\\
c_4&=2np\bar{\varepsilon}^2\lambda_\sigma.
\end{align*}
\end{proof}
In Section~\ref{sec:tec}, we assumed that the optimal cost is quadratically upper bounded (cf. Assumption~\ref{ass:quad_upper_bound}), which is not restrictive in the nominal linear-quadratic setting.
Lemma~\ref{lem:value_fcn_upper_bound} proves that, under mild assumptions, the optimal cost of the robust MPC problem~\eqref{eq:robust_MPC} admits (locally) a similar upper bound and can thus be seen as the robust counterpart of Assumption~\ref{ass:quad_upper_bound}.

The term $c_4$ is solely due to the slack variable $\sigma$.
This can be explained by noting that, for $\xi_t=0$, $\alpha(t)$, $\bar{u}_{[0,L-1]}(t)$, $\bar{y}_{[0,L-1]}(t)$ can all be chosen to be zero, as long as $\sigma$ compensates the noise, i.e., $\sigma_{[-n,-1]}(t)=-\varepsilon_{[t-n,t-1]}$.

\subsection{Prediction error bound}\label{sec:prediction_error}
Denote the optimizers of~\eqref{eq:robust_MPC} by $\alpha^*(t),\sigma^*(t),\bar{u}^*(t),\bar{y}^*(t)$, and the output trajectory resulting from an open-loop application of $\bar{u}^*(t)$ by $\hat{y}$.
One of the reasons why it is difficult to analyze the presented MPC scheme is the non-trivial relation between the predicted output $\bar{y}^*(t)$ and the ``actual'' output $\hat{y}$.
In the following, we derive a bound on the difference between the two quantities, which will play an important role in proving recursive feasibility %, constraint satisfaction, 
and practical stabiliy of the proposed scheme.
For an integer $k$, define constants $\rho_{2,k},\rho_{\infty,k}$ such that
\begin{align*}
\rho_{2,k}&\geq\left\lVert CA^k\Phi_\dagger\right\rVert_2^2,\\
\rho_{\infty,k}&\geq\left\lVert CA^k\Phi_\dagger\right\rVert_\infty, 
\end{align*}
where $\Phi_\dagger$ is a left-inverse of the observability matrix $\Phi$.

\begin{lemma}\label{lem:prediction_error}
If~\eqref{eq:robust_MPC} is feasible at time $t$, then the following inequalities hold for all $k\in\mathbb{I}_{[0,L-1]}$
\begin{align}
\label{eq:lem_prediction_error1}
\lVert &\hat{y}_{t+k}-\bar{y}^*_k(t)\rVert_2^2\leq8c_5\bar{\varepsilon}^2\lVert\alpha^*(t)\rVert_2^2+2\lVert\sigma_k^*(t)\rVert_2^2\\\nonumber
&+\rho_{2,n+k}\left(16n\bar{\varepsilon}^2\left(c_5\lVert\alpha^*(t)\rVert_2^2+p\right)+4\lVert\sigma_{[-n,-1]}^*(t)\rVert_2^2\right),\\
\label{eq:lem_prediction_error}
\lVert &\hat{y}_{t+k}-\bar{y}^*_k(t)\rVert_\infty\leq\bar{\varepsilon}\lVert\alpha^*(t)\rVert_1+\lVert\sigma_k^*(t)\rVert_\infty \\\nonumber
&+\rho_{\infty,n+k}\left(\bar{\varepsilon}\left(\lVert\alpha^*(t)\rVert_1+1\right)+\left\lVert\sigma_{[-n,-1]}^*(t)\right\rVert_\infty\right),
\end{align}
with $c_5$ from~\eqref{eq:lem_constant}.
\end{lemma}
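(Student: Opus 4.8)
The plan is to compare both $\hat{y}$ and $\bar{y}^*(t)$ against an auxiliary \emph{noise-free} trajectory. Define $y'\coloneqq H_{L+n}(y^d)\alpha^*(t)$, i.e.\ the output one would reconstruct from the clean Hankel matrix using the optimal multiplier $\alpha^*(t)$. By Theorem~\ref{thm:traj_rep}, the pair $\big(\bar{u}^*_{[-n,L-1]}(t),\,y'\big)$ is a genuine trajectory of $G$, since $\bar{u}^*(t)=H_{L+n}(u^d)\alpha^*(t)$ by \eqref{eq:robust_MPC1}. Hence $y'_{[0,L-1]}$ and $\hat{y}_{[t,t+L-1]}$ are both outputs of $G$ driven by the identical input $\bar{u}^*_{[0,L-1]}(t)$, differing only through their internal states at the prediction instant. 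Writing $\hat{y}_{t+k}-\bar{y}_k^*(t)=\big(\hat{y}_{t+k}-y'_k\big)-\big(\bar{y}_k^*(t)-y'_k\big)$ splits the prediction error into a term caused by data noise and slack, and a term caused by the initial-state mismatch; I would bound the two separately.

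For the first (``direct'') term, \eqref{eq:robust_MPC1} together with $\tilde{y}^d=y^d+\varepsilon^d$ gives, block-wise for $k\in\mathbb{I}_{[0,L-1]}$,
\begin{align*}
\bar{y}_k^*(t)-y'_k=\big[H_{L+n}(\varepsilon^d)\alpha^*(t)\big]_k-\sigma_k^*(t).
\end{align*}
The Hankel-noise block is controlled by \eqref{eq:infty_bound} in the $\ell_\infty$ case and by the single-block version of \eqref{eq:lem_constant} (i.e.\ $\lVert[H_{L+n}(\varepsilon^d)\alpha^*(t)]_k\rVert_2^2\leq c_5\bar{\varepsilon}^2\lVert\alpha^*(t)\rVert_2^2$) in the $\ell_2$ case, which together with $\sigma_k^*(t)$ already produces the ``$k$-indexed'' summands in \eqref{eq:lem_prediction_error1} and \eqref{eq:lem_prediction_error}.

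For the second (initial-state) term, I would exploit that the two trajectories share the input on the initial window $\mathbb{I}_{[-n,-1]}$, namely $u_{[t-n,t-1]}$ by \eqref{eq:robust_MPC2}, so that all input (feedthrough and Toeplitz) contributions cancel in the difference. Recovering the state at time $t-n$ through the left-inverse $\Phi_\dagger$ of the observability matrix and propagating $n+k$ steps forward yields
\begin{align*}
\hat{y}_{t+k}-y'_k=CA^{n+k}\Phi_\dagger\big(y_{[t-n,t-1]}-y'_{[-n,-1]}\big).
\end{align*}
The initial-window output mismatch $w\coloneqq y_{[t-n,t-1]}-y'_{[-n,-1]}$ is evaluated from \eqref{eq:robust_MPC1}--\eqref{eq:robust_MPC2} as $w=[H_{L+n}(\varepsilon^d)\alpha^*(t)]_{[-n,-1]}-\varepsilon_{[t-n,t-1]}-\sigma_{[-n,-1]}^*(t)$, and bounded using $\lVert\varepsilon_{[t-n,t-1]}\rVert_\infty\leq\bar{\varepsilon}$, the bounds \eqref{eq:infty_bound} and \eqref{eq:lem_constant} (now on the $n$-block window), and the definitions $\rho_{2,n+k}\geq\lVert CA^{n+k}\Phi_\dagger\rVert_2^2$, $\rho_{\infty,n+k}\geq\lVert CA^{n+k}\Phi_\dagger\rVert_\infty$. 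Assembling the two terms by the triangle inequality gives \eqref{eq:lem_prediction_error}, and repeated application of $(a+b)^2\leq2(a^2+b^2)$ gives \eqref{eq:lem_prediction_error1}; the precise constants $8c_5$, $16n$, etc.\ follow from the particular grouping of these elementary inequalities.

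The main obstacle is the identity $\hat{y}_{t+k}-y'_k=CA^{n+k}\Phi_\dagger w$: establishing it cleanly requires observing that both trajectories are generated by the \emph{same} input over the joint horizon, so that the initial-state discrepancy is the only source of error, that this discrepancy is reconstructed from the output mismatch on $\mathbb{I}_{[-n,-1]}$ via $\Phi_\dagger$ (using observability of the minimal realization), and that it is then advanced exactly $n+k$ steps---hence the index $n+k$ rather than $k$ in $\rho_{2,n+k}$ and $\rho_{\infty,n+k}$. Once this state-error propagation is pinned down, the remaining estimates are routine norm manipulations.
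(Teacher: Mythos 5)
Your proposal is correct and follows essentially the same route as the paper: the paper likewise introduces the noise-free reconstruction $H_{L+n}(y^d)\alpha^*(t)$ (your $y'$), notes it is a trajectory of $G$ under the input $\bar{u}^*(t)$, bounds the direct mismatch $\bar{y}^*(t)-y'$ via $H_{L+n}(\varepsilon^d)\alpha^*(t)-\sigma^*(t)$ using \eqref{eq:infty_bound} and \eqref{eq:lem_constant}, and propagates the initial-window discrepancy $H_n(\varepsilon^d_{[0,N-L-1]})\alpha^*(t)-\varepsilon_{[t-n,t-1]}-\sigma^*_{[-n,-1]}(t)$ through $CA^{n+k}\Phi_\dagger$ exactly as you describe. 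The only cosmetic difference is that the paper works with the single difference trajectory $y^-=\hat{y}-H_{L+n}(y^d)\alpha^*(t)$ rather than your two-sided split, and, like you, it only spells out the $\ell_\infty$ case in detail, leaving the $\ell_2$ constants to the same elementary $(a+b)^2\leq 2a^2+2b^2$ bookkeeping.
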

\begin{proof}
We show only~\eqref{eq:lem_prediction_error} and note that~\eqref{eq:lem_prediction_error1} can be derived following the same steps, using~\eqref{eq:lem_constant} as well as the inequality $(a+b)^2\leq2a^2+2b^2$.
As written above, $\hat{y}$ is the trajectory, resulting from an open-loop application of $\bar{u}^*(t)$ and with initial conditions specified by $\left(u_{[t-n,t-1]},\hat{y}_{[t-n,t-1]}\right)=\left(u_{[t-n,t-1]},y_{[t-n,t-1]}\right)$.
On the other hand, according to~\eqref{eq:robust_MPC1}, $\bar{y}^*(t)$ is comprised as 
\begin{align*}
\bar{y}^*(t)=H_{L+n}\left(\varepsilon^d\right)\alpha^*(t)+H_{L+n}\left(y^d\right)\alpha^*(t)-\sigma^*(t).
\end{align*}
It follows directly from~\eqref{eq:robust_MPC1} and~\eqref{eq:robust_MPC2} that the second term on the right-hand side $H_{L+n}\left(y^d\right)\alpha^*(t)$ is a trajectory of $G$, resulting from an open-loop application of $\bar{u}^*(t)$ and with initial output conditions 
\begin{align*}
\tilde{y}_{[t-n,t-1]}+\sigma_{[-n,-1]}^*(t)-H_n\left(\varepsilon^d_{[0,N-L-1]}\right)\alpha^*(t).
\end{align*}
Define
\begin{align*}
y^-_{[t-n,t+L-1]}=\hat{y}_{[t-n,t+L-1]}-H_{L+n}\left(y^d\right)\alpha^*(t).
\end{align*}
Since $G$ is LTI and $y^-$ contains the difference between two trajectories with the same input, we can assume $\bar{u}^*(t)=0$ for the following arguments without loss of generality.
Hence, $y^-$ is equal to the output component of a trajectory $\left(u^-,y^-\right)$ with zero input and with initial trajectory
\begin{align}\label{eq:lem_2_yminus}
&\begin{bmatrix}u^-_{[t-n,t-1]}\\y^-_{[t-n,t-1]}\end{bmatrix}=\\\nonumber
&\qquad\begin{bmatrix}0\\H_n\left(\varepsilon^d_{[0,N-L-1]}\right)\alpha^*(t)-\varepsilon_{[t-n,t-1]}-\sigma_{[-n,-1]}^*(t)\end{bmatrix}.
\end{align}
The relation to the internal state $x^-$ can be derived as
\begin{align*}
y^-_{[t-n,t-1]}=\Phi x^-_{t-n},
\end{align*}
with the observability matrix $\Phi$.
This leads to the corresponding output at time $t+k$
\begin{align*}
y^-_{t+k}=CA^{n+k}\Phi_\dagger y^-_{[t-n,t-1]},
\end{align*}
where $\Phi_\dagger$ is a left-inverse of $\Phi$.
Using this fact, the expression for $y^-_{[t-n,t-1]}$ in~\eqref{eq:lem_2_yminus}, and the inequality~\eqref{eq:infty_bound}, $\lVert y^-_{t+k}\rVert_\infty$ can be bounded as
\begin{align*}
\lVert y^-_{t+k}\rVert_\infty\leq\rho_{\infty,n+k}\left(\bar{\varepsilon}\left(\lVert\alpha^*(t)\rVert_1+1\right)+\left\lVert\sigma_{[-n,-1]}^*(t)\right\rVert_\infty\right).
\end{align*}
Note that
\begin{align*}
\left\lVert \hat{y}_{t+k}-\bar{y}_k^*(t)\right\rVert_\infty
\leq \lVert y^-_{t+k}\rVert_\infty+\bar{\varepsilon}\lVert\alpha^*(t)\rVert_1+\lVert\sigma^*_k(t)\rVert_\infty,
\end{align*}
which concludes the proof.
\end{proof}

Essentially, Lemma~\ref{lem:prediction_error} gives a bound on the mismatch between the predicted output and the actual output resulting from the open-loop application of $\bar{u}^*(t)$, depending on the optimal solutions $\alpha^*,\sigma^*$, and on system parameters.
In model-based robust MPC schemes, similar bounds are typically used to propagate uncertainty, where the role of the weighting vector $\alpha$ to account for multiplicative uncertainty is replaced by the state $x$ and a model-based uncertainty description (compare~\cite{Kouvaritakis16} for details).
The main difference in the proposed MPC scheme is that the predicted trajectory $\bar{y}^*(t)$ is in general not a trajectory of the system in the sense of Definition~\ref{def:trajectory_of}, corresponding to the input $\bar{u}^*(t)$.
On the contrary, in model-based robust MPC, the predicted trajectory usually satisfies the dynamics of a (nominal) model of the system.

\subsection{Recursive feasibility}\label{sec:robust_feasibility}
The following result shows that, if the proposed robust MPC scheme is feasible at time $t$, then it is also feasible at time $t+n$, %and that the closed-loop output satisfies the constraints
assuming that the noise level is sufficiently small.

\begin{proposition}\label{prop:robust_rec_feas}
Suppose Assumption~\ref{ass:pe} and~\ref{ass:length_2n} hold.
Then, for any $V_{ROA}>0$, there exists an $\bar{\varepsilon}_0>0$ such that for all $\bar{\varepsilon}\leq\bar{\varepsilon}_0$, %if the robust MPC problem~\eqref{eq:robust_MPC} is feasible at initial time $t=0$, then 
if $V_t\leq V_{ROA}$ for some $t\geq 0$, then the optimization problem~\eqref{eq:robust_MPC} is feasible at time $t+n$.
%
\begin{comment}
\begin{itemize}
\item[(i)] the $n$-step MPC scheme is feasible at any $t\in\mathbb{N}$,
\item[(ii)] the closed-loop output with the $n$-step MPC scheme satisfies $y_t\in\mathbb{Y}$ for all $t\in\mathbb{N}$.
\end{itemize}
\end{comment}
\end{proposition}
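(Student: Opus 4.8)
The plan is to construct an explicit feasible candidate solution to Problem~\eqref{eq:robust_MPC} at time $t+n$ from the optimal solution at time $t$, mirroring the standard MPC shifting argument but carefully accounting for the noise-induced prediction mismatch. First I would observe that Algorithm~\ref{alg:MPC_n_step} applies $\bar{u}_{[0,n-1]}^*(t)$ over the next $n$ steps, so the true closed-loop inputs on $[t,t+n-1]$ are known, and the candidate input trajectory at $t+n$ should be the $n$-step shift of $\bar{u}^*(t)$, with a freshly appended piece. The terminal equality constraint~\eqref{eq:robust_MPC3} guarantees that $\bar{u}_{[L-n,L-1]}^*(t)=u_n^s=0$ and $\bar{y}_{[L-n,L-1]}^*(t)=y_n^s=0$, so the shifted trajectory already ends at the equilibrium; I can append $n$ further steps of the equilibrium input $u^s=0$ to fill the horizon. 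The new initial condition~\eqref{eq:robust_MPC2} at time $t+n$ is $(u_{[t,t+n-1]},\tilde{y}_{[t,t+n-1]})$, which is measured online and consists of the genuine noisy closed-loop data.

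The key difficulty is that the candidate output trajectory $\bar{y}(t+n)$ cannot simply be the shift of $\bar{y}^*(t)$, because the \emph{true} closed-loop outputs $\hat{y}_{t+k}$ generated by applying $\bar{u}^*_{[0,n-1]}(t)$ differ from the \emph{predicted} outputs $\bar{y}^*_k(t)$. This is precisely where Lemma~\ref{lem:prediction_error} enters: it bounds $\lVert\hat{y}_{t+k}-\bar{y}^*_k(t)\rVert$ in terms of $\lVert\alpha^*(t)\rVert$, $\lVert\sigma^*(t)\rVert$, and $\bar\varepsilon$. I would therefore construct the candidate $\alpha(t+n)$ via the right-inverse $H_{ux}^\dagger$ of~\eqref{eq:ass_pe_matrix}, analogously to~\eqref{eq:lem_value_fcn_bound_alpha}, applied to the shifted-and-appended candidate input together with the true internal state $x_t$ reached at time $t+n$; this $\alpha(t+n)$ then reproduces a genuine system trajectory through the noiseless Hankel matrix, and the slack $\sigma(t+n)$ is chosen as in~\eqref{eq:lem_1_sigma} to absorb both the measurement noise $\varepsilon$ in the new initial condition and the data noise $\varepsilon^d$ entering through $H_{L+n}(\tilde y^d)\alpha(t+n)$. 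Verifying that this $\sigma(t+n)$ satisfies the slack constraint~\eqref{eq:robust_MPC5} again relies on the $\ell_\infty$-bound~\eqref{eq:infty_bound}.

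The hard part will be establishing that such a candidate actually exists while satisfying the terminal constraint~\eqref{eq:robust_MPC3} exactly, despite the fact that the state reached after $n$ closed-loop steps is not the predicted terminal state. Because noise causes $\hat{y}$ to deviate from $\bar{y}^*(t)$, the true state $x_t$ at time $t+n$ is perturbed away from the origin, so the appended equilibrium-continuation no longer drives the system precisely to $x^s=0$ over the horizon; I must instead use controllability (as in Lemma~\ref{lem:value_fcn_upper_bound}, exploiting $L\geq 2n$ from Assumption~\ref{ass:length_2n}) to steer this perturbed state back to the origin within the horizon while keeping $\bar{u}_k(t+n)\in\mathbb{U}$. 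This is where the smallness of $\bar\varepsilon$ is essential: I would bound the state perturbation by the prediction error of Lemma~\ref{lem:prediction_error}, bound $\lVert\alpha^*(t)\rVert$ and $\lVert\sigma^*(t)\rVert$ through the value-function sublevel set $V_t\leq V_{ROA}$ via Lemma~\ref{lem:value_fcn_upper_bound}, and conclude that for $\bar\varepsilon\leq\bar\varepsilon_0$ small enough the perturbed state lands in the controllability ball $\mathbb{B}_\delta$ of Lemma~\ref{lem:value_fcn_upper_bound}, so that the required steering input respects the input constraints (which hold strictly since $(u^s,y^s)\in\text{int}(\mathbb{U}\times\mathbb{Y})$). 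Assembling these pieces yields a fully feasible $(\alpha(t+n),\sigma(t+n),\bar{u}(t+n),\bar{y}(t+n))$, proving feasibility at $t+n$.
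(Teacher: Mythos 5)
Your proposal is correct and follows essentially the same route as the paper's proof: shift $\bar{u}^*(t)$, take the true open-loop response $\hat{y}$ as the candidate output, use Lemma~\ref{lem:prediction_error} together with the terminal constraint at time $t$ to show the state at time $t+L-n$ is small, steer it to the origin over $n$ steps via controllability (using $L\geq 2n$), and define $\alpha'(t+n)$ through $H_{ux}^\dagger$ and $\sigma'(t+n)$ to absorb the noise, checking~\eqref{eq:robust_MPC5} via~\eqref{eq:infty_bound}. The only minor inaccuracy is that the bound on $\lVert\alpha^*(t)\rVert$ comes directly from the regularization term $\lambda_\alpha\bar{\varepsilon}\lVert\alpha^*(t)\rVert_2^2\leq V_{ROA}$ in the cost (and $\lVert\sigma^*(t)\rVert_\infty$ from constraint~\eqref{eq:robust_MPC5}), not from Lemma~\ref{lem:value_fcn_upper_bound}.
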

\begin{proof}
%\textbf{(i). Recursive Feasibility}\\
Suppose the robust MPC problem~\eqref{eq:robust_MPC} is feasible at time $t$ with $V_t\leq V_{ROA}$ and denote the optimizers by $\alpha^*(t),\sigma^*(t),\bar{u}^*(t),\bar{y}^*(t)$.
As in Lemma~\ref{lem:prediction_error}, the trajectory resulting from an open-loop application of $\bar{u}^*(t)$ and with initial conditions specified by $\left(u_{[t-n,t-1]},y_{[t-n,t-1]}\right)$ is denoted by $\hat{y}$.
For $k\in\mathbb{I}_{[-n,L-2n-1]}$, we choose for the candidate input the shifted previously optimal solution, i.e., $\bar{u}_k'(t+n)=\bar{u}_{k+n}^*(t)$.
Over the first $n$ steps, the candidate output must satisfy $\bar{y}_{[-n,-1]}'(t+n)=\tilde{y}_{[t,t+n-1]}$ due to~\eqref{eq:robust_MPC2}.
Further, for $k\in\mathbb{I}_{[0,L-2n-1]}$, the output is chosen as $\bar{y}_k'(t+n)=\hat{y}_{t+n+k}$.
Since $\bar{y}_{[L-n,L-1]}^*(t)=0$ by~\eqref{eq:robust_MPC3}, the prediction error bound of Lemma~\ref{lem:prediction_error} implies that, for any $k\in\mathbb{I}_{[L-n,L-1]}$, it holds that
\begin{align*}
\lVert\hat{y}_{t+k}&\rVert_\infty\leq\bar{\varepsilon}\lVert\alpha^*(t)\rVert_1+\lVert\sigma^*(t)\rVert_\infty\\
&+\rho_{\infty,n+k}\left(\bar{\varepsilon}\left(\lVert\alpha^*(t)\rVert_1+1\right)+\lVert\sigma_{[-n,-1]}^*(t)\rVert_\infty\right).
\end{align*}
For $\bar{\varepsilon}_0$ sufficiently small, $\lVert\sigma^*(t)\rVert_\infty$ becomes arbitrarily small due to~\eqref{eq:robust_MPC5}.
Further, using that $\lambda_\alpha\bar{\varepsilon}\lVert\alpha^*(t)\rVert_2^2\leq J_L^*(u_{t-n,t-1]},\tilde{y}_{[t-n,t-1]})\leq V_{ROA}$, we can bound $\alpha^*(t)$ as
\begin{align*}
\lVert\alpha^*(t)\rVert_1&\leq\sqrt{N-L-n+1}\lVert\alpha^*(t)\rVert_2\\
&\leq\sqrt{N-L-n+1}\sqrt{\frac{V_{ROA}}{\lambda_\alpha\bar{\varepsilon}}}.
\end{align*} 
Hence, if $\bar{\varepsilon}_0$ is sufficiently small, then $\hat{y}_{t+k}$ becomes arbitrarily small at the above time instants.
This implies that the internal state in some minimal realization corresponding to the trajectory $(\bar{u}^*(t),\hat{y})$ at time $t+L-n$, i.e., $\hat{x}_{t+L-n}=\Phi_\dagger\hat{y}_{[t+L-n,t+L-1]}$, approaches zero for $\bar{\varepsilon}\to0$.
Thus, similar to the proof of Lemma~\ref{lem:value_fcn_upper_bound}, there exists an input trajectory $\bar{u}_{[L-2n,L-n-1]}'(t+n)$, which brings the state and the corresponding output $\bar{y}_{[L-2n,L-n-1]}'(t+n)$ to zero in $n$ steps, while satisfying
\begin{align}\label{eq:prop_proof_rec_feas_ctrb}
\left\lVert\begin{bmatrix}\bar{u}_{[L-2n,L-n-1]}'(t+n)
\\\bar{y}_{[L-2n,L-n-1]}'(t+n)\end{bmatrix}
\right\rVert_2^2&\leq \Gamma_{uy}\lVert \hat{x}_{t+L-n}\rVert_2^2.
\end{align}
Moreover, in the interval $\mathbb{I}_{[L-n,L-1]}$, we choose $\bar{u}_{[L-n,L-1]}'(t+n)=0$, $\bar{y}_{[L-n,L-1]}'(t+n)=0$, i.e.,~\eqref{eq:robust_MPC3} is satisfied.
The above arguments imply that
\begin{align*}
\left(\bar{u}'(t+n),\begin{bmatrix}\hat{y}_{[t,t+n-1]}\\\bar{y}_{[0,L-1]}'(t+n)
\end{bmatrix}\right)
\end{align*}
is a trajectory of the unknown LTI system in the sense of Definition~\ref{def:trajectory_of}.
Denote the corresponding internal state in some minimal realization by $\bar{x}'(t+n)$.
We choose $\alpha'(t+n)$ as a corresponding solution to~\eqref{eq:thm_hankel}, i.e., as
\begin{align}\label{eq:prop_proof_rec_feas_1}
\alpha'(t+n)=H_{ux}^\dagger\begin{bmatrix}\bar{u}'_{[-n,L-1]}(t+n)\\x_t\end{bmatrix}
\end{align}
with $H_{ux}$ from~\eqref{eq:ass_pe_matrix}.
Finally, we fix
\begin{align}\label{eq:prop_1_sigma}
\sigma'(t+n)=H_{L+n}\left(\tilde{y}^d\right)\alpha'(t+n)-\bar{y}'(t+n),
\end{align}
which implies that~\eqref{eq:robust_MPC1} holds.
It remains to show that the constraint~\eqref{eq:robust_MPC5} is satisfied.
Over the first $n$ time steps,~\eqref{eq:robust_MPC5} holds since
\begin{align}\nonumber
&\sigma_{[-n,-1]}'(t+n)=H_n\left(\tilde{y}^d_{[0,N-L-1]}\right)\alpha'(t+n)-\tilde{y}_{[t,t+n-1]}\\\nonumber
&\stackrel{\eqref{eq:prop_proof_rec_feas_1}}{=}H_n\left(\varepsilon^d_{[0,N-L-1]}\right)\alpha'(t+n)+y_{[t,t+n-1]}-\tilde{y}_{[t,t+n-1]}\\\label{eq:prop_proof_rec_feas_sigma1}
&=H_n\left(\varepsilon^d_{[0,N-L-1]}\right)\alpha'(t+n)-\varepsilon_{[t,t+n-1]}.
\end{align}
Further, using the definition of $\sigma'(t+n)$ in~\eqref{eq:prop_1_sigma} and the bound~\eqref{eq:infty_bound}, we obtain
\begin{align}\label{eq:prop_proof_rec_feas_sigma2}
\lVert&\sigma_{[0,L-1]}'(t+n)\rVert_\infty\leq\bar{\varepsilon}\lVert\alpha'(t+n)\rVert_1\\\nonumber
&+\underbrace{\left\lVert H_{L}\left(y^d_{[n,N-1]}\right)\alpha'(t+n)-\bar{y}_{[0,L-1]}'(t+n)\right\rVert_\infty}_{=0},
\end{align}
and thus,~\eqref{eq:robust_MPC5} holds.
\end{proof}

Proposition~\ref{prop:robust_rec_feas} shows that, for any sublevel set of the Lyapunov function $V$, there exists a sufficiently small noise bound $\bar{\varepsilon}_0$ such that, for any $\bar{\varepsilon}\leq\bar{\varepsilon}_0$ and any state starting in the sublevel set at time $t$, the $n$-step MPC scheme is feasible at time $t+n$.
In particular, the required noise bound decreases if the size of the sublevel set, i.e., $V_{ROA}$, increases and vice versa.
This can be explained by noting that the noise in~\eqref{eq:robust_MPC1} corresponds to a multiplicative uncertainty, which affects the prediction accuracy more strongly if the current state is further away from the origin and hence the Lyapunov function $V_t$ is larger.
We note that this does not imply recursive feasibility of the $n$-step MPC scheme in the standard sense since it remains to be shown that the sublevel set $V_t\leq V_{ROA}$ is invariant, which will be proven in Section~\ref{sec:robust_stability}.
In our main result, the set of initial states for which $V_0\leq V_{ROA}$ will play the role of the guaranteed region of attraction of the closed-loop system.

The input candidate solution used to prove recursive feasibility in Proposition~\ref{prop:robust_rec_feas} is analogous to a candidate solution one would use to show robust recursive feasibility in model-based robust MPC with terminal equality constraints.
The output candidate solution is sketched in Figure~\ref{fig:sketch_candidate}.
Up to time $L-2n-1$, $\bar{y}'(t+n)$ is equal to $\hat{y}$ (shifted by $n$ times steps), which is the output, resulting from an open-loop application of $\bar{u}^*(t)$.
This choice together with the prediction error bound of Lemma~\ref{lem:prediction_error} implies that the internal state corresponding to $\bar{y}'(t+n)$ at time $L-2n$ is close to zero.
Thus, by controllability, there exists an input trajectory satisfying the input constraints, which brings the state and the output to zero in $n$ steps.
In the interval $\mathbb{I}_{[L-2n,L-n-1]}$, the candidate output is chosen as this trajectory.
This also implies that the choice $\bar{y}_{[L-n,L-1]}'(t+n)=0$ makes the candidate solution between $0$ and $L-1$, i.e., $\left(\bar{u}_{[0,L-1]}'(t+n),\bar{y}_{[0,L-1]}'(t+n)\right)$, a trajectory\footnote{
In most practical cases, $(\bar{u}^*(t),\bar{y}^*(t))$ are not trajectories of the system due to the slack variable $\sigma$ and the noise.} of the unknown system $G$ in the sense of Definition~\ref{def:trajectory_of}.
Finally, the suggested candidate input is also similar to~\cite{Yu14}, where inherent robustness of quasi-infinite horizon (model-based) MPC is shown.

\begin{figure}
		\includegraphics[width=0.49\textwidth]{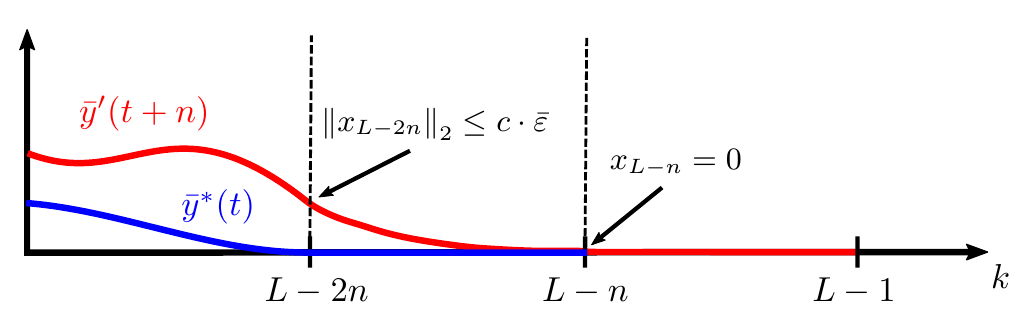}
		\caption{Sketch of the candidate output for recursive feasibility.
		Due to the terminal equality constraints~\eqref{eq:robust_MPC3}, the last $n$ steps of the optimal predicted output $\bar{y}^*(t)$ are equal to zero.
		According to the prediction error bound derived in Lemma~\ref{lem:prediction_error}, this implies that the state resulting from an open-loop application of the optimal input $\bar{u}^*(t)$ is small at time $L-2n$, provided that $\bar{\varepsilon}$ is sufficiently small.
		Therefore, a candidate solution $\bar{y}'(t+n)$ can be constructed by appending the open-loop output $\hat{y}$ by a local deadbeat controller, which steers the state to the origin in $n$ steps.
		}	\label{fig:sketch_candidate}
\end{figure}

\begin{remark}\label{rk:one_step}
For a $1$-step MPC scheme, a similar argument to prove recursive feasibility can be applied, given that $\bar{u}^*_{[L-2n,L-n-1]}(t)$ and $\bar{y}^*_{[L-2n,L-n-1]}(t)$ (and hence $\hat{y}_{[t+L-2n,t+L-n-1]}$) are close to zero.
This is required to construct a feasible input which steers the state and the corresponding output to zero, similar to the proof of Proposition~\ref{prop:robust_rec_feas}, and it is, e.g., the case if the initial state $x_t$ is close to zero.
That is, the result of Proposition~\ref{prop:robust_rec_feas} holds locally for a $1$-step MPC scheme, as expected based on model-based MPC with terminal equality constraints under disturbances using inherent robustness properties.
\end{remark}

\begin{remark}
\label{rk:wlog_zero}
As mentioned in Section~\ref{sec:scheme}, all of our theoretical guarantees for the presented robust MPC scheme can be straightforwardly extended to the case $(u^s,y^s)\neq0$, with the corresponding steady-state $\xi^s\neq0$.
The main difference lies in the bound~\eqref{eq:lem:value_fcn_bound}, which becomes $V_t\leq \tilde{c}_3\lVert \xi_t-\xi^s\rVert_2^2+\tilde{c}_4$ for constants $\tilde{c}_3\neq c_3,\tilde{c}_4\neq c_4$, where $\tilde{c}_3$ can be made arbitrarily close to $c_3$.
On the other hand, $\tilde{c}_4$ changes depending on $\xi^s$, since the right-hand side of~\eqref{eq:lem_value_fcn_bound_3} would need to be proportional to $\lVert \xi_t-\xi^s\Vert_2^2+\lVert \xi^s\rVert_2^2$.
The same phenomenon can be observed in a bound of $\alpha'(t+n)$ based on~\eqref{eq:prop_proof_rec_feas_1}, which will be used in the stability proof.
As will become clear later in this section, such changes in the bound of $\alpha'(t+n)$ as well as in the constant $\tilde{c}_4$ do not affect our qualitative theoretical results, but they may potentially (quantitatively) deterioriate the robustness w.r.t. the noise level $\bar{\varepsilon}$.
Intuitively, this can be explained by noting that~\eqref{eq:robust_MPC1} corresponds to a multiplicative uncertainty and thus, stabilization of the origin is simpler than stabilization of any other equilibrium.
Since equilibria with $(u^s,y^s)\neq0$ require a significantly more involved notation, we omit this extension.
\end{remark}

\subsection{Practical exponential stability}\label{sec:robust_stability}

The following is our main stability result.
It shows that, under Assumptions~\ref{ass:pe} and~\ref{ass:length_2n}, for a low noise amplitude and large persistence of excitation, and for suitable regularization parameters, the application of the scheme~\eqref{eq:robust_MPC} as described in Algorithm~\ref{alg:MPC_n_step} leads to a practically exponentially stable closed loop.
\begin{theorem}\label{thm:robust}
Suppose Assumptions~\ref{ass:pe} and~\ref{ass:length_2n} hold.
Then, for any $V_{ROA}>0$, there exist constants $\underline{\lambda}_\alpha,\overline{\lambda}_\alpha,\underline{\lambda}_\sigma,\overline{\lambda}_\sigma>0$ such that, for all $\lambda_\alpha,\lambda_\sigma$ satisfying
\begin{align}\label{eq:thm_bounds1}
\begin{split}
\underline{\lambda}_\alpha\leq\lambda_\alpha\leq\overline{\lambda}_\alpha,\quad\underline{\lambda}_\sigma\leq\lambda_\sigma\leq\overline{\lambda}_\sigma,
\end{split}
\end{align}
there exist constants $\bar{\varepsilon}_0,\bar{c}_{pe}>0$, as well as a continuous, strictly increasing $\beta:[0,\bar{\varepsilon}_0]\to[0,V_{ROA}]$ with $\beta(0)=0$, such that, for all $\bar{\varepsilon},c_{pe}$ satisfying
\begin{align}\label{eq:thm_bounds2}
\bar{\varepsilon}\leq\bar{\varepsilon}_0,\quad c_{pe}\bar{\varepsilon}\leq\overline{c}_{pe},
\end{align}
the sublevel set $V_t\leq V_{ROA}$ is invariant and $V_t$ converges exponentially to $V_t\leq\beta(\bar{\varepsilon})$ in closed loop with the $n$-step MPC scheme for all initial conditions for which $V_0\leq V_{ROA}$.
\end{theorem}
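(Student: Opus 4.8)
The plan is to establish practical exponential stability by showing that the Lyapunov function $V_t = J_L^*(\tilde{\xi}_t) + \gamma W(\xi_t)$ decreases along the closed-loop trajectory of the $n$-step scheme up to a term that is quadratic in the noise level $\bar{\varepsilon}$. First I would use the candidate solution constructed in the recursive feasibility proof (Proposition~\ref{prop:robust_rec_feas}) to obtain an upper bound on $J_L^*(\tilde{\xi}_{t+n})$ in terms of $J_L^*(\tilde{\xi}_t)$. The key observation is that the cost of the shifted candidate drops by the stage costs over the first $n$ steps, i.e. by roughly $\sum_{k=0}^{n-1}\ell(\bar{u}_k^*(t),\bar{y}_k^*(t))$, while the newly appended portion (the deadbeat maneuver steering the state to the origin in the last $n$ steps) contributes a cost that is small because the state $\hat{x}_{t+L-n}$ is of order $\bar{\varepsilon}$ by the prediction error bound. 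I would therefore combine the standard "shift and append" decrease with the bound~\eqref{eq:prop_proof_rec_feas_ctrb} on the appended control effort and the bound~\eqref{eq:lem_prediction_error1} of Lemma~\ref{lem:prediction_error} to control the mismatch between predicted and actual outputs.

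\textbf{From stage cost to state decrease.} The next step is to convert the stage-cost decrease into a decrease in $\gamma W(\xi_t) = \gamma\lVert\xi_t\rVert_P^2$, exactly as in the nominal proof of Theorem~\ref{thm:mpc_tec}. Since $\ell$ is only positive semidefinite in the (minimal) state, I would invoke the IOSS Lyapunov function $W$ for the detectable realization with state $\xi$, which satisfies an inequality of the form $W(\xi_{t+n}) - W(\xi_t) \leq -c\sum_{k=0}^{n-1}\lVert\xi_{t+k}\rVert_2^2 + (\text{input-output energy terms})$. The input and output energy terms over the first $n$ closed-loop steps are exactly what the stage cost penalizes, so choosing $\gamma$ proportional to $\lambda_{\min}(Q,R)/\max\{c_1,c_2\}$ (as in the nominal case) makes the cross terms cancel, leaving a strictly negative $-\frac{\gamma}{2}\sum_{k}\lVert\xi_{t+k}\rVert_2^2$ contribution plus an additive residual of order $\bar{\varepsilon}^2$ coming from the noise entering the closed-loop outputs $\tilde{y}$ versus $y$ and from the appended deadbeat cost. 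Combining with the $J_L^*$ decrease yields $V_{t+n} - V_t \leq -\eta\lVert\xi_t\rVert_2^2 + c\,\bar{\varepsilon}^2$ for suitable $\eta > 0$ and constant $c$ depending on the system and regularization parameters.

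\textbf{Invariance and the parameter windows.} With such a one-step (in the $n$-step sense) decrease, practical exponential stability follows from the local upper bound of Lemma~\ref{lem:value_fcn_upper_bound} together with the lower bound $V_t \geq \gamma\lambda_{\min}(P)\lVert\xi_t\rVert_2^2$: outside a neighborhood of the origin of radius $O(\bar{\varepsilon})$ the negative term dominates, forcing exponential contraction, while inside that neighborhood $V_t$ stays below a threshold $\beta(\bar{\varepsilon})$. I would define $\beta(\bar{\varepsilon})$ so that the sublevel set $\{V_t \leq \beta(\bar{\varepsilon})\}$ absorbs the residual $c\,\bar{\varepsilon}^2$, and verify $\beta$ is continuous, strictly increasing, and vanishes at zero. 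The invariance of $\{V_t \leq V_{ROA}\}$ then follows by choosing $\bar{\varepsilon}_0$ small enough that the decrease keeps $V_{t+n} \leq V_{ROA}$ whenever $V_t \leq V_{ROA}$, which also re-supplies the hypothesis $V_t \leq V_{ROA}$ needed to invoke Proposition~\ref{prop:robust_rec_feas} at each step, closing the recursion. The windows $[\underline{\lambda}_\alpha,\overline{\lambda}_\alpha]$ and $[\underline{\lambda}_\sigma,\overline{\lambda}_\sigma]$ emerge from requiring $\lambda_\alpha,\lambda_\sigma$ large enough that the regularization terms in the candidate cost remain controlled (so the upper bound $c_3$ stays finite and the $\alpha^*$ bound used in feasibility holds) yet small enough that they do not inflate the additive residual beyond what $\beta(\bar{\varepsilon}) \leq V_{ROA}$ permits.

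\textbf{Main obstacle.} I expect the principal difficulty to lie in rigorously bounding the additive residual term and showing it is genuinely $O(\bar{\varepsilon}^2)$ uniformly over the sublevel set, rather than growing with $\lVert\xi_t\rVert_2$. The subtlety is that $\alpha^*(t)$ can scale with the state, and since the prediction-error bound~\eqref{eq:lem_prediction_error1} involves $\bar{\varepsilon}^2\lVert\alpha^*(t)\rVert_2^2$, the product of the multiplicative uncertainty with a large $\alpha^*$ could in principle spoil the decrease; this is precisely why the condition $c_{pe}\bar{\varepsilon} \leq \overline{c}_{pe}$ on the persistence-of-excitation-to-noise ratio is needed, so that $\bar{\varepsilon}\lVert\alpha^*(t)\rVert_2^2$ is dominated by the regularized cost $\lambda_\alpha\bar{\varepsilon}\lVert\alpha^*(t)\rVert_2^2 \leq V_{ROA}$ and the cross terms can be absorbed into the negative definite part. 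Carefully tracking these couplings, and ensuring the constants combine so that a single $\gamma$ and a single parameter window work simultaneously for decrease, invariance, and feasibility, is the crux of the argument.
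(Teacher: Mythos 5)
Your proposal follows essentially the same route as the paper's proof: the shifted-and-appended candidate from Proposition~\ref{prop:robust_rec_feas}, the deadbeat tail bounded via~\eqref{eq:prop_proof_rec_feas_ctrb} and Lemma~\ref{lem:prediction_error}, the IOSS Lyapunov function to convert stage-cost decrease into state decrease, the local upper bound of Lemma~\ref{lem:value_fcn_upper_bound} to close the contraction, and the construction of $\beta$ absorbing the noise-induced residual. The one bookkeeping point your sketch glosses over is that the candidate weight $\alpha'(t+n)$ is bounded via $c_{pe}$ by $\lVert x_t\rVert_2^2+\lVert\bar{u}^*_{[0,L-n-1]}(t)\rVert_2^2$, the latter being bounded only by $V_t/\lambda_{\min}(R)$, so the decrease inequality actually carries a term proportional to $c_{pe}\bar{\varepsilon}\,V_t$ (this, rather than the $\alpha^*(t)$ terms, is where $c_{pe}\bar{\varepsilon}\leq\bar{c}_{pe}$ is needed, and it is folded back into the state-norm term using $V_t\leq c_{3,V_{ROA}}\lVert\xi_t\rVert_2^2+c_4$); the additive residual is consequently $O(\bar{\varepsilon})$ rather than $O(\bar{\varepsilon}^2)$, which does not affect the qualitative conclusion.
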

\begin{proof}
The proof consists of three parts:
First, we bound the increase in the Lyapunov function $V$.
Thereafter, we prove that, for suitably chosen bounds on the parameters, there exists a function $\beta$, which satisfies the above requirements.
Finally, we show invariance of the sublevel set $V_t\leq V_{ROA}$ and exponential convergence of $V_t$ to $V_t\leq\beta(\bar{\varepsilon})$.\\
\textbf{(i). Practical Stability}\\
Suppose Problem~\eqref{eq:robust_MPC} is feasible at time $t$ and let $V_{ROA}>0$ be arbitrary.
Further, let $\bar{\varepsilon}_0$ be sufficiently small such that Proposition~\ref{prop:robust_rec_feas} is applicable.
The cost of the candidate solution derived in Proposition~\ref{prop:robust_rec_feas} at time $t+n$ is
\begin{align*}
J_L&\left(u_{[t,t+n-1]},\tilde{y}_{[t,t+n-1]},\alpha'(t+n),\sigma'(t+n)\right)\\
=&\sum_{k=0}^{L-1}\ell\left(\bar{u}_k'(t+n),\bar{y}_k'(t+n)\right)+\lambda_\alpha\bar{\varepsilon}\lVert\alpha'(t+n)\rVert_2^2\\
&+\lambda_\sigma\lVert\sigma'(t+n)\rVert_2^2.
\end{align*}
Thus, we obtain for the optimal cost
\begin{align}\nonumber
&J_L^*(u_{[t,t+n-1]},\tilde{y}_{[t,t+n-1]})\\
\nonumber
&\leq J_L\left(u_{[t,t+n-1]},\tilde{y}_{[t,t+n-1]},\alpha'(t+n),\sigma'(t+n)\right)\\
\label{eq:thm_proof_value_fcn_diff}
&= J_L^*(u_{[t-n,t-1]},\tilde{y}_{[t-n,t-1]})-\sum_{k=0}^{L-1}\ell\left(\bar{u}_k^*(t),\bar{y}_k^*(t)\right)\\
\nonumber
&-\lambda_\alpha\bar{\varepsilon}\lVert\alpha^*(t)\rVert_2^2-\lambda_\sigma\lVert\sigma^*(t)\rVert_2^2+\lambda_\alpha\bar{\varepsilon}\lVert\alpha'(t+n)\rVert_2^2\\
\nonumber
&+\lambda_\sigma\lVert\sigma'(t+n)\rVert_2^2+\sum_{k=0}^{L-1}\ell(\bar{u}_k'(t+n),\bar{y}_k'(t+n)).
\end{align}
In the following key technical part of the proof (Parts (i.i)-(i.iv)), we derive useful bounds for most terms on the right-hand side of~\eqref{eq:thm_proof_value_fcn_diff}.
This will lead to a decay bound of the optimal cost which is then used to prove practical exponential stability of the closed loop.\\
\textbf{(i.i) Stage Cost Bounds}\\
We first bound those terms in~\eqref{eq:thm_proof_value_fcn_diff}, which involve the stage cost.
The above difference can be decomposed as
\begin{align}\label{eq:thm_proof_stage_cost_diff}
&\sum_{k=0}^{L-1}\ell(\bar{u}_k'(t+n),\bar{y}_k'(t+n))-\sum_{k=0}^{L-1}\ell\left(\bar{u}_k^*(t),\bar{y}_k^*(t)\right)\\\nonumber
&=\sum_{k=L-2n}^{L-n-1}\ell\left(\bar{u}_k'(t+n),\bar{y}_k'(t+n)\right)-\sum_{k=0}^{n-1}\ell\left(\bar{u}_k^*(t),\bar{y}_k^*(t)\right)\\\nonumber
&+\sum_{k=0}^{L-2n-1}\ell\left(\bar{u}_k'(t+n),\bar{y}_k'(t+n)\right)-\ell\left(\bar{u}_{k+n}^*(t),\bar{y}_{k+n}^*(t)\right),
\end{align}
where we use that $\bar{u}_k'(t+n),\bar{y}_k'(t+n),\bar{u}_k^*(t),\bar{y}_k^*(t)$ are all zero for $k\in\mathbb{I}_{[L-n,L-1]}$ due to~\eqref{eq:robust_MPC3}.
To bound the first term on the right-hand side of~\eqref{eq:thm_proof_stage_cost_diff}, note that
\begin{align*}
\lVert \hat{x}_{t+L-n}\rVert_2^2\leq\lVert\Phi_\dagger\rVert_2^2\lVert\hat{y}_{[t+L-n,t+L-1]}\rVert_2^2,
\end{align*}
with $\hat{x}_{t+L-n}$ as in the proof of Proposition~\ref{prop:robust_rec_feas}.
Further, since $\bar{y}_{[L-n,L-1]}^*(t)=0$, $\hat{y}$ can be bounded in the considered time interval as in~\eqref{eq:lem_prediction_error1}, i.e.,
\begin{align*}
&\lVert\hat{y}_{[t+L-n,t+L-1]}\rVert_2^2\leq8c_5n\bar{\varepsilon}^2\lVert\alpha^*(t)\rVert_2^2+2\lVert\sigma^*(t)\rVert_2^2\\
&+\sum_{k=L-n}^{L-1}\rho_{2,n+k}\\
&\cdot\left(16n\bar{\varepsilon}^2\left(c_5\lVert\alpha^*(t)\rVert_2^2+p\right)+4\lVert\sigma^*_{[-n,-1]}(t)\rVert_2^2\right).
\end{align*}
Hence, it holds that
\begin{align}\label{eq:thm_proof_stage_cost_bound_1}
&\sum_{k=L-2n}^{L-n-1}\ell\left(\bar{u}_k'(t+n),\bar{y}_k'(t+n)\right)\\
\nonumber
&\stackrel{\eqref{eq:prop_proof_rec_feas_ctrb}}{\leq}\lambda_{\max}(Q,R)\Gamma_{uy}\lVert\hat{x}_{t+L-n}\rVert_2^2\\
\nonumber
&\leq\lambda_{\max}(Q,R)\Gamma_{uy}\lVert\Phi_\dagger\rVert_2^2\Big(8c_5n\bar{\varepsilon}^2\lVert\alpha^*(t)\rVert_2^2+2\lVert\sigma^*(t)\rVert_2^2\\
\nonumber
&+\sum_{k=L-n}^{L-1}\rho_{2,n+k}\\
\nonumber
&\cdot\left(16n\bar{\varepsilon}^2\left(c_5\lVert\alpha^*(t)\rVert_2^2+p\right)+4\lVert\sigma^*_{[-n,-1]}(t)\rVert_2^2\right)\Big).
\end{align}
Next, we bound the difference between the third and the fourth term on the right-hand side of~\eqref{eq:thm_proof_stage_cost_diff}.
The following relations are readily derived:
\begin{align}\nonumber
&\lVert\bar{y}_k'(t+n)\rVert_Q^2-\lVert\bar{y}_{k+n}^*(t)\rVert_Q^2\\\nonumber
=&\lVert\bar{y}_k'(t+n)-\bar{y}^*_{k+n}(t)+\bar{y}_{k+n}^*(t)\rVert_Q^2-\lVert\bar{y}_{k+n}^*(t)\rVert_Q^2\\\nonumber
=&\lVert\bar{y}_k'(t+n)-\bar{y}_{k+n}^*(t)\rVert_Q^2\\\label{eq:robust_thm_proof_bound5}
&+2\left(\bar{y}_k'(t+n)-\bar{y}_{k+n}^*(t)\right)^\top Q\bar{y}_{k+n}^*(t)\\\nonumber
\leq&\lVert\bar{y}_k'(t+n)-\bar{y}_{k+n}^*(t)\rVert_Q^2\\\nonumber
&+2\lVert\bar{y}_k'(t+n)-\bar{y}_{k+n}^*(t)\rVert_Q\lVert\bar{y}_{k+n}^*(t)\rVert_Q.
\end{align}
By using $2\lVert\bar{y}_{k+n}^*(t)\rVert_Q\leq1+\lVert\bar{y}_{k+n}^*(t)\rVert_Q^2$ as well as $\lVert\bar{y}_{k+n}^*(t)\rVert_Q^2\leq V_{ROA}$, we arrive at
\begin{align}\label{eq:thm_proof_auxiliary_bound1}
2\lVert\bar{y}_k'&(t+n)-\bar{y}_{k+n}^*(t)\rVert_Q\lVert\bar{y}_{k+n}^*(t)\rVert_Q\\
\nonumber
\leq&\lVert\bar{y}_k'(t+n)-\bar{y}_{k+n}^*(t)\rVert_Q\left(1+V_{ROA}\right).
\end{align}
Therefore, since the inputs coincide over the considered time interval, and due to~\eqref{eq:robust_thm_proof_bound5} as well as~\eqref{eq:thm_proof_auxiliary_bound1}, it holds that
\begin{align}\nonumber
&\sum_{k=0}^{L-2n-1}\ell\left(\bar{u}_k'(t+n),\bar{y}_k'(t+n)\right)-\ell\left(\bar{u}_{k+n}^*(t),\bar{y}_{k+n}^*(t)\right)\\
\label{eq:thm_proof_stage_cost_bound_2}
&\leq\sum_{k=0}^{L-2n-1}\lVert\bar{y}_k'(t+n)-\bar{y}_{k+n}^*(t)\rVert_Q^2\\\nonumber
&\quad+\lVert\bar{y}_k'(t+n)-\bar{y}_{k+n}^*(t)\rVert_Q\left(1+V_{ROA}\right).
\end{align}
The difference $\lVert\bar{y}_k'(t+n)-\bar{y}_{k+n}^*(t)\rVert_Q$ can be bounded similar to Lemma~\ref{lem:prediction_error}.
Using the constraint~\eqref{eq:robust_MPC5} to bound $\lVert\sigma^*(t)\rVert_2$, it can be shown that the bound is of the form $\lVert\bar{y}_k'(t+n)-\bar{y}_{k+n}^*(t)\rVert_Q\leq \tilde{C}_1\lVert\alpha^*(t)\rVert_2+\tilde{C}_2\leq\tilde{C}_1\left(1+\lVert\alpha^*(t)\rVert_2^2\right)+\tilde{C}_2$, where both $\tilde{C}_1$ and $\tilde{C}_2$ are proportional to $\bar{\varepsilon}$.
Hence, applying Lemma~\ref{lem:prediction_error} to~\eqref{eq:thm_proof_stage_cost_bound_2}, the sum of~\eqref{eq:thm_proof_stage_cost_bound_1} and~\eqref{eq:thm_proof_stage_cost_bound_2} can be bounded as $C_1\lVert\alpha^*(t)\rVert_2^2+C_2\lVert\sigma^*(t)\rVert_2^2+C_3$ for suitable $C_i>0$, where $C_1$ and $C_3$ are quadratic in $\bar{\varepsilon}$ and vanish for $\bar{\varepsilon}=0$.
Therefore, if $\underline{\lambda}_\alpha$ and $\underline{\lambda}_\sigma$ are sufficiently large, then~\eqref{eq:thm_proof_value_fcn_diff} implies
\begin{align}
\nonumber
&J_L^*(u_{[t,t+n-1]},\tilde{y}_{[t,t+n-1]})\\
\label{eq:thm_proof_value_fcn_diff2}
&\leq J_L^*(u_{[t-n,t-1]},\tilde{y}_{[t-n,t-1]})-\sum_{k=0}^{n-1}\ell\left(\bar{u}_k^*(t),\bar{y}_k^*(t)\right)\\
\nonumber
&+\lambda_\alpha\bar{\varepsilon}\lVert\alpha'(t+n)\rVert_2^2+\lambda_\sigma\lVert\sigma'(t+n)\rVert_2^2
+c_6,
\end{align}
for a suitable constant $c_6>0$, which is quadratic in $\bar{\varepsilon}$ and vanishes for $\bar{\varepsilon}=0$.
\\
\textbf{(i.ii) Bound of $\mathbf{\lVert\sigma'(t+n)\rVert_2^2}$}\\
By applying standard norm bounds to the slack variable candidate $\sigma'(t+n)$ as defined in~\eqref{eq:prop_1_sigma} (compare also~\eqref{eq:prop_proof_rec_feas_sigma1} and~\eqref{eq:prop_proof_rec_feas_sigma2}), we obtain
\begin{align}\label{eq:thm_proof_sigma_bound}
\lVert\sigma'(t+n)\rVert_2^2&\leq 2np\bar{\varepsilon}^2+c_5(L+2n)\bar{\varepsilon}^2\lVert\alpha'(t+n)\rVert_2^2,
\end{align}
with $c_5=p(N-L-n+1)$ as in~\eqref{eq:lem_constant}.\\
\textbf{(i.iii) Bound of $\mathbf{\lVert\alpha'(t+n)\rVert_2^2}$}\\
For the weighting vector $\alpha'(t+n)$, it holds that
\begin{align*}
&\lVert\alpha'(t+n)\rVert_2^2\stackrel{\eqref{eq:prop_proof_rec_feas_1}}{\leq} c_{pe}\left\lVert\begin{bmatrix}\bar{u}'_{[-n,L-1]}(t+n)\\\bar{x}_{-n}'(t+n)\end{bmatrix}\right\rVert_2^2\\
&= c_{pe}\left( \lVert x_t\rVert_2^2+\lVert\bar{u}_{[-n,L-1]}'(t+n)\rVert_2^2\right)=c_{pe}\lVert x_t\rVert_2^2\\
&+c_{pe}\left(\lVert\bar{u}_{[0,L-n-1]}^*(t)\rVert_2^2+\lVert\bar{u}'_{[L-2n,L-n-1]}(t+n)\rVert_2^2\right).
\end{align*}
Similar to~\eqref{eq:thm_proof_stage_cost_bound_1}, we can use~\eqref{eq:prop_proof_rec_feas_ctrb} to bound the last term as
\begin{align}\label{eq:thm_proof_alpha_bound}
&\lVert\bar{u}'_{[L-2n,L-n-1]}(t+n)\rVert_2^2\\\nonumber
&\leq\Gamma_{uy}\lVert\Phi_\dagger\rVert_2^2\Big(8c_5n\bar{\varepsilon}^2\lVert\alpha^*(t)\rVert_2^2+2\lVert\sigma^*(t)\rVert_2^2+\sum_{k=L-n}^{L-1}\rho_{2,n+k}\\
\nonumber
&\cdot\left(16n\bar{\varepsilon}^2\left(c_5\lVert\alpha^*(t)\rVert_2^2+p\right)+4\lVert\sigma^*_{[-n,-1]}(t)\rVert_2^2\right)\Big).
\end{align}
The bound~\eqref{eq:thm_proof_alpha_bound} is of the same form as~\eqref{eq:thm_proof_stage_cost_bound_1} and~\eqref{eq:thm_proof_stage_cost_bound_2}.
Due to this fact, using the bound~\eqref{eq:thm_proof_sigma_bound} for $\sigma'(t+n)$, and by potentially choosing $\underline{\lambda}_\alpha$ and $\underline{\lambda}_\sigma$ larger,~\eqref{eq:thm_proof_value_fcn_diff} implies
\begin{align}
\nonumber
&J_L^*(u_{[t,t+n-1]},\tilde{y}_{[t,t+n-1]})\\
\label{eq:thm_proof_value_fcn_diff3}
&\leq J_L^*(u_{[t-n,t-1]},\tilde{y}_{[t-n,t-1]})-\sum_{k=0}^{n-1}\ell\left(\bar{u}_k^*(t),\bar{y}_k^*(t)\right)\\
\nonumber
&+\left(\lambda_\alpha+\lambda_\sigma c_7\right)\left(\lVert x_t\rVert_2^2+\lVert\bar{u}_{[0,L-n-1]}^*(t)\rVert_2^2\right)c_{pe}\bar{\varepsilon}+c_8,
\end{align}
for suitable constants $c_7,c_8>0$, which vanish for $\bar{\varepsilon}=0$.\\
\textbf{(i.iv) IOSS Bound}\\
As in the proof of Theorem~\ref{thm:mpc_tec}, we consider now $V_t=J_L^*(\tilde{\xi}_t)+\gamma W(\xi_t)$ with the IOSS Lyapunov function $W$ for some $\gamma>0$.
It follows directly from~\eqref{eq:thm_eq0_proof2},~\eqref{eq:thm_proof_value_fcn_diff3}, and from $\lVert x_t\rVert_2^2\leq\Gamma_x\lVert \xi_t\rVert_2^2$ that
\begin{align}\label{eq:thm_proof_value_fcn_diff4}
&V_{t+n}-V_t\leq-\sum_{k=0}^{n-1}\ell\left(\bar{u}_k^*(t),\bar{y}_k^*(t)\right)\\
\nonumber
&+\gamma(-\frac{1}{2}\lVert \xi_{[t,t+n-1]}\rVert_2^2+c_1\lVert u_{[t,t+n-1]}\rVert_2^2+c_2\lVert y_{[t,t+n-1]}\rVert_2^2)\\
\nonumber
&+\left(\lambda_\alpha+\lambda_\sigma c_7\right)\left(\Gamma_x\lVert \xi_t\rVert_2^2+\lVert\bar{u}_{[0,L-n-1]}^*(t)\rVert_2^2\right)c_{pe}\bar{\varepsilon}+c_8.
\end{align}
The identity $(a+b)^2\leq2(a^2+b^2)$ yields
\begin{align*}
\lVert y_{[t,t+n-1]}\rVert_2^2&\leq2\lVert\bar{y}_{[0,n-1]}^*(t)\rVert_2^2\\
&+2\lVert y_{[t,t+n-1]}-\bar{y}_{[0,n-1]}^*(t)\rVert_2^2,
\end{align*}
where the latter term can again be bounded using Lemma~\ref{lem:prediction_error}.
Similar to the earlier steps of this proof, the components of the bound $\lVert y_{[t,t+n-1]}-\bar{y}_{[0,n-1]}^*(t)\rVert_2^2$ vanish in~\eqref{eq:thm_proof_value_fcn_diff4} if $\underline{\lambda}_\sigma,\underline{\lambda}_\alpha$ are chosen sufficiently large, except for an additive constant, which depends solely on the noise.
Moreover, choosing $\gamma=\frac{\lambda_{\min}(Q,R)}{\max\{c_1,2c_2\}}$, it holds that
\begin{align*}
\gamma(c_1\lVert u_{[t,t+n-1]}\rVert_2^2+2c_2\lVert \bar{y}_{[0,n-1]}^*&(t)\rVert_2^2)\\
\leq&\sum_{k=0}^{n-1}\ell(\bar{u}_k^*(t),\bar{y}_k^*(t)).
\end{align*}
Combining these facts, we arrive at
\begin{align*}
V_{t+n}-V_t&\leq\left(\left(\lambda_\alpha+\lambda_\sigma c_7\right)\Gamma_x c_{pe}\bar{\varepsilon}-\frac{\gamma}{2}\right)\lVert \xi_t\rVert_2^2\\
&+\left(\lambda_\alpha+\lambda_\sigma c_7\right)c_{pe}\bar{\varepsilon}\lVert\bar{u}_{[0,L-n-1]}^*(t)\rVert_2^2+c_9
\end{align*}
for a suitable constant $c_9$, which vanishes for $\bar{\varepsilon}=0$.
Finally, note that $\lambda_{\min}(R)\lVert\bar{u}_{[0,L-n-1]}^*(t)\rVert_2^2\leq V_t$, which leads to
\begin{align}\label{eq:thm_proof_value_fcn_diff5}
V_{t+n}-V_t&\leq\left(\left(\lambda_\alpha+\lambda_\sigma c_7\right)\Gamma_x c_{pe}\bar{\varepsilon}-\frac{\gamma}{2}\right)\lVert \xi_t\rVert_2^2\\\nonumber
&\quad+\frac{\left(\lambda_\alpha+\lambda_\sigma c_7\right)c_{pe}\bar{\varepsilon}}{\lambda_{\min}(R)}V_t+c_9\\\nonumber
&\eqqcolon \left(c_{10}-\frac{\gamma}{2}\right)\lVert \xi_t\rVert_2^2+c_{11}V_t+c_9.
\end{align}
\textbf{(ii). Construction of $\mathbf{\beta}$}\\
The local upper bound in Lemma~\ref{lem:value_fcn_upper_bound}, which holds for any $\xi_t\in \mathbb{B}_\delta$, implies that the following holds for any $V_{ROA}>0$, and any $\xi_t$ with $V_t\leq V_{ROA}$:
\begin{align}
\label{eq:robust_thm_proof_bound4}
V_t\leq\underbrace{\max\left\{ c_{3},\frac{V_{ROA}-c_{4}}{\delta^2}\right\}}_{c_{3,V_{ROA}}\coloneqq}\lVert \xi_t\rVert_2^2+c_{4}.
\end{align}
We first consider $V_{ROA}=\delta^2c_3+c_4$, which implies $c_{3,V_{ROA}}=c_3$.
Further, we define $c_{12}\coloneqq\frac{\gamma}{2}-c_{10}-c_{3}c_{11}$ as well as
\begin{align*}
\beta(\bar{\varepsilon})=\frac{\frac{\gamma}{2}c_{4}+c_{3}c_9}{c_{12}}
\end{align*}
for any $\bar{\varepsilon}$ for which $c_{12}>0$.
Recall that $c_3=a_1\bar{\varepsilon}^2+a_2\bar{\varepsilon}+a_3,c_4=a_4\bar{\varepsilon}^2,c_9=a_5\bar{\varepsilon}^2+a_6\bar{\varepsilon},c_{10}=a_7\bar{\varepsilon}^2+a_8\bar{\varepsilon},c_{11}=a_9\bar{\varepsilon}^2+a_{10}\bar{\varepsilon}$, for suitable constants $a_i>0$.
This implies $\beta(0)=0$.
Next, we show the existence of a constant $\bar{\varepsilon}_0$ such that $\beta$ is strictly increasing on $[0,\bar{\varepsilon}_0]$.
If $c_{12}>0$, then $\beta$ is strictly increasing since its numerator increases with $\bar{\varepsilon}$ whereas its denominator decreases with $\bar{\varepsilon}$. 
In the following, we show that $c_{12}>0$.
By definition, we have
\begin{align*}
c_{12}&=\frac{\gamma}{2}-\left(\lambda_\alpha+\lambda_\sigma c_7\right)\Gamma_xc_{pe}\bar{\varepsilon}\\
&-\frac{\left(\lambda_\alpha+\lambda_\sigma c_7\right)c_{pe}\bar{\varepsilon}}{\lambda_{\min}(R)}\Big(\lambda_{\max}(Q,R)\Gamma_{uy}\Gamma_x+\gamma\lambda_{max}(P)\\
&+\left(\lambda_\alpha+c_5(L+2n)\lambda_\sigma\bar{\varepsilon}\right)c_{pe}\bar{\varepsilon}(\Gamma_{uy}\Gamma_x+\lVert M\rVert_2^2\Big).
\end{align*}
It can be seen directly from this expression that, if $\lambda_\alpha\leq\overline{\lambda}_{\alpha},\lambda_\sigma\leq\overline{\lambda}_\sigma$, with arbitrary but fixed upper bounds $\overline{\lambda}_{\alpha},\overline{\lambda}_\sigma$, and $c_{pe}\bar{\varepsilon}$ is sufficiently small, then $c_{12}>0$.
It remains to show that $\beta(\bar{\varepsilon}_0)\leq V_{ROA}$, or, equivalently,
\begin{align*}
\frac{\frac{\gamma}{2}c_4+c_3c_9}{\frac{\gamma}{2}-c_{10}-c_3c_{11}}\leq \delta^2c_3+c_4,
\end{align*}
which can be ensured by choosing $\bar{\varepsilon}_0$ sufficiently small.\\
\textbf{(iii). Invariance and Exponential Convergence}\\
Take an arbitrary $\xi_t$ with $V_t\leq V_{ROA}$ and note that this implies that~\eqref{eq:robust_MPC} is feasible and thus,~\eqref{eq:thm_proof_value_fcn_diff5} and~\eqref{eq:robust_thm_proof_bound4} hold.
Moreover, $c_{12}>0$ implies $c_{10}<\frac{\gamma}{2}$.
Defining $V_{\beta,t}\coloneqq V_t-\beta(\bar{\varepsilon})$, we thus obtain
\begin{align*}
&V_{t+n}\stackrel{\eqref{eq:thm_proof_value_fcn_diff5}}{\leq}  \left(1+c_{11}\right)V_t+\left(c_{10}-\frac{\gamma}{2}\right)\lVert \xi_t\rVert_2^2+c_9\\
&\stackrel{\eqref{eq:robust_thm_proof_bound4}}{\leq} \left(1+c_{11}+\frac{c_{10}-\frac{\gamma}{2}}{c_3}\right) V_t+\frac{c_4}{c_3}\left(\frac{\gamma}{2}-c_{10}\right)+c_9\\
&\leq\left(1+c_{11}+\frac{c_{10}-\frac{\gamma}{2}}{c_3}\right)V_{\beta,t}+\beta(\bar{\varepsilon}),
\end{align*}
where the last inequality follows from elementary computations.
This in turn implies the following contraction property
\begin{align}\label{eq:thm_proof_contraction_property}
V_{\beta,t+n}\leq\underbrace{\left(1+c_{11}+\frac{c_{10}-\frac{\gamma}{2}}{c_3}\right)}_{<1}V_{\beta,t}.
\end{align}
If the noise bound $\bar{\varepsilon}_0$ is sufficiently small, then this implies invariance of the sublevel set $V_t\leq V_{ROA}$ and hence, by Proposition~\ref{prop:robust_rec_feas}, recursive feasibility of the $n$-step MPC scheme.
Applying the contraction property~\eqref{eq:thm_proof_contraction_property} recursively, we can thus conclude that $V_t$ converges exponentially to $V_t\leq\beta(\bar{\varepsilon})$.

So far, we have only considered the case $V_{ROA}=\delta^2c_3+c_4$.
It remains to show that, \emph{for any} $V_{ROA}>0$, there exist suitable parameter bounds such that
\begin{align*}
c_{12,V_{ROA}}\coloneqq \frac{\gamma}{2}-c_{10}-c_{3,V_{ROA}}c_{11}>0
\end{align*}
with $c_{3,V_{ROA}}$ from~\eqref{eq:robust_thm_proof_bound4}.
It is easily seen from the above discussion that, for any \emph{fixed} $V_{ROA}>0$ and for \emph{fixed} bounds $\overline{\lambda}_\alpha,\overline{\lambda}_\sigma$, $c_{12,V_{ROA}}>0$ can always be ensured if $c_{pe}\bar{\varepsilon}$ is sufficiently small, i.e., if the bound $\bar{c}_{pe}$ is sufficiently small.
\end{proof}
\begin{comment}
\begin{itemize}
\item Note that $\lVert\sigma'(t+n)\rVert\leq ...$ does not make use of $\lVert\sigma^*(t)\rVert\leq...$, i.e., the constraints do not really exploit recursive arguments - this is due to the fact that we do / can not relate $\alpha'(t+n)$ and $\alpha^*(t)$ $\rightarrow$ big difference to standard robust MPC arguments.
\end{itemize}
\end{comment}

Theorem~\ref{thm:robust} shows that the closed loop of the proposed data-driven MPC scheme admits a (practical) Lyapunov function, which converges robustly and exponentially to a set, whose size shrinks with the noise level.
Since $\lVert\xi_t\rVert_2^2\leq\frac{1}{\gamma\lambda_{\min}(P)}V_t$ due to~\eqref{eq:lem:value_fcn_bound}, this implies practical exponential stability of the equilibrium $\xi=0$.
The result requires that the noise level $\bar{\varepsilon}$ is small, the amount of persistence of excitation is large compared to the noise level (i.e., $c_{pe}\bar{\varepsilon}$ is small), and the regularization parameters are chosen suitably.
Concerning the latter requirement, $\lambda_\alpha$ cannot be chosen arbitrarily large, which can be explained by noting that the optimal $\alpha$ is usually not zero, even in the noise-free case.
On the other hand, $\lambda_\alpha$ cannot be too close to zero since solutions $\alpha(t)$ of~\eqref{eq:robust_MPC1} are not unique and large choices of $\alpha(t)$ amplify the influence of the noise in $\tilde{y}^d$ on the prediction accuracy.
Further, $\lambda_\sigma$ has to be chosen sufficiently large to ensure stability, but not arbitrarily large for a fixed noise level.
To be more precise, $\lambda_\alpha c_{pe}\bar{\varepsilon}$ and $\lambda_\sigma c_{pe}\bar{\varepsilon}^2$ have to be small, i.e., for a fixed $c_{pe}$, choosing the regularization parameters too large deteriorates the robustness of the scheme w.r.t. the noise level.
One can show that the theoretical properties in Theorem~\ref{thm:robust} are also valid without imposing the lower bound in~\eqref{eq:thm_bounds1} on $\lambda_{\sigma}$, by using the more conservative constraint~\eqref{eq:robust_MPC5} in the proof.
However,~\eqref{eq:robust_MPC5} is non-convex (cf. Remark~\ref{rk:sigma_bound}), but can typically be enforced implicitly if $\lambda_\sigma$ is chosen large enough.

In the proof of Theorem~\ref{thm:robust}, a close connection between the region of attraction, i.e., the set of initial conditions with $V_0\leq V_{ROA}$, and various parameters becomes apparent.
First of all, the noise bound $\bar{\varepsilon}$ needs to be sufficiently small depending on $V_{ROA}$ to allow for an application of Proposition~\ref{prop:robust_rec_feas}.
Moreover, if $V_{ROA}$ increases, then also $c_{3,V_{ROA}}$ increases and hence, $c_{11}$ must decrease to ensure $c_{12,V_{ROA}}>0$ and thereby exponential stability.
To render $c_{11}$ small, $c_{pe}\bar{\varepsilon}$ must decrease, i.e., the amount of persistence of excitation compared to the noise level must increase.
Thus, for $c_{pe}\bar{\varepsilon}\to0$ (and a sufficiently small noise bound $\bar{\varepsilon}$ due to Proposition~\ref{prop:robust_rec_feas}), the region of attraction approaches the set of all initially feasible points.
For a fixed $c_{pe}$, the size of the region of attraction increases if the noise level decreases and vice versa.
A similar connection between the maximal disturbance and the region of attraction can be found in~\cite{Yu14}, which studies inherent robustness properties of quasi-infinite horizon MPC (but the result applies similarly to model-based $n$-step MPC with terminal equality constraints).
Further, if $c_{pe}$ decreases then so do $c_{10}$ as well as $c_{11}$ and hence also $\beta(\bar{\varepsilon})$.
This implies that larger persistence of excitation (i.e., a lower $c_{pe}\bar{\varepsilon}$) does not only increase the region of attraction but it also reduces the tracking error.

\begin{remark}\label{rk:RMPC_practical_application}
To apply the proposed data-driven MPC scheme in practice, the following ingredients are required.
First of all, the design parameters in the cost, i.e., $Q,R,\lambda_\alpha,\lambda_\sigma$, have to be selected suitably.
The proof and discussion of Theorem~\ref{thm:robust} give a qualitative guideline for choosing the regularization parameters.
Further, as in the nominal case (Section~\ref{sec:tec}), measured data with a persistently exciting input as well as a (potentially rough) upper bound on the system's order need to be available.
Finally, an upper bound on the noise level $\bar{\varepsilon}$ is required.

While these ingredients suffice to apply the proposed scheme, computing bounds as in~\eqref{eq:thm_bounds1} and~\eqref{eq:thm_bounds2} is a difficult task in practice.
Theorem~\ref{thm:robust} should be interpreted as a qualitative result which illustrates a) the influence of the regularization parameters on stability and robustness of the presented MPC scheme and b) that large persistence of excitation (compared to the noise level) increases the region of attraction and reduces the tracking error.
Further, many of the employed bounds rely on conservative estimates such as $(a+b)^2\leq2a^2+2b^2$.
In principle, it is possible to improve some of the quantitative estimates at the price of a more involved notation.
Nevertheless, such improved estimates may lead to meaningful, non-conservative, verifiable conditions on the noise level $\bar{\varepsilon}$ for closed-loop stability, and are therefore an interesting issue for future research.
\end{remark}

\begin{remark}
In the nominal MPC scheme~\eqref{eq:term_eq_MPC} as well as in its robust modification~\eqref{eq:robust_MPC}, the data $(u^d,y^d)$ used for prediction is fixed.
Alternatively, one may update the data using online measurements, given that the closed loop is persistently exciting.
Indeed, we believe that one of the main advantages of the proposed scheme is its ability to cope (locally) with nonlinear components of the unknown system.
Nonlinear dynamical systems are in general difficult to identify and thus, the proposed approach may be simpler than a model-based MPC scheme with prior system identification.
As illustrated in~\cite{Coulson19} with an application of a similar MPC scheme to a nonlinear stochastic quadcopter system, the approach is already applicable in practice to time-varying or nonlinear dynamics without updating the data online.
Providing theoretical guarantees for the application of the proposed scheme to a nonlinear system is an interesting and relevant problem for future research.
\end{remark}

Similar to the nominal MPC scheme, it is easy to see that the only free decision variables of Problem~\eqref{eq:robust_MPC} are $\alpha(t)$ and $\sigma(t)$ with at least $m(L+2n)+n$ and $p(L+n)$ free parameters, respectively (cf. Remark~\ref{rk:complexity_nominal}).
On the contrary, to implement a model-based MPC scheme (with state measurements), $mL$ parameters are required.
When neglecting the constraint~\eqref{eq:robust_MPC5} (cf. Remark~\ref{rk:sigma_bound}), the slack variable $\sigma(t)$ can be eliminated from~\eqref{eq:robust_MPC} by directly penalizing the norm of the model mismatch $\bar{y}(t)-H_{L+n}(\tilde{y}^d)\alpha(t)$ in the cost.
Hence, considering the minimal amount of data required for persistence of excitation, Problem~\eqref{eq:robust_MPC} has roughly the same number of decision variables as a model-based MPC problem.
In contrast to the nominal case, however, Theorem~\ref{thm:robust} implies that larger data horizons $N$ are beneficial for the theoretical properties of the proposed scheme as they typically decrease the constant $c_{pe}$.
On the other hand, increasing values for $N$ also lead to an increasing online complexity of~\eqref{eq:robust_MPC} since $\alpha(t)\in\mathbb{R}^{N-L+1}$, i.e., the presented MPC approach allows for a tradeoff between computational complexity and desired closed-loop performance by appropriately selecting $N$.

On the contrary, the performance of identification-based MPC typically improves if larger amounts of data are employed, whereas the online complexity is independent of $N$.
However, while the scheme presented in this paper provides end-to-end guarantees for the closed loop using noisy data of finite length, the derivation of non-conservative estimation bounds on system parameters from such data, which would be required for guarantees in model-based MPC, is difficult in general and an active field of research~\cite{Matni19,Matni19b}.
An extensive \emph{quantitative} comparison of model-based MPC and the proposed data-driven MPC in theory and for practical examples is an interesting issue for future research.

\section{Example}\label{sec:example}
In this section, we apply the robust data-driven MPC scheme of Section~\ref{sec:robust} to a four tank system, which has been considered in~\cite{raff2006nonlinear}.
This system is well-known as a real-world example, which is open-loop stable, but can be destabilized by an MPC without terminal constraints if the prediction horizon is too short.
Similarly, we show in this section that our proposed scheme is able to track a specified setpoint, whereas a scheme without terminal constraints as suggested in~\cite{Yang15,Coulson19,Coulson19b} leads to an unstable closed loop, unless it is suitably modified.

We consider a linearized version of the system from~\cite{raff2006nonlinear}, which takes the form
\begin{align*}
x_{k+1}&=\begin{bmatrix}0.921&0&0.041&0\\
0&0.918&0&0.033\\
0&0&0.924&0\\0&0&0&0.937\end{bmatrix}x_k\\
&+\begin{bmatrix}
0.017&0.001\\0.001&0.023\\0&0.061\\0.072&0
\end{bmatrix}u_k,
\end{align*}
\begin{align*}
y_k&=\begin{bmatrix}1&0&0&0\\0&1&0&0
\end{bmatrix}x_k.
\end{align*}
For the following application of the robust data-driven MPC scheme, the system matrices are \emph{unknown} and only measured input-output data is available.
The control goal is tracking of the setpoint of the linearized system
\begin{align*}
(u^s,y^s)=\left(\begin{bmatrix}1\\1\end{bmatrix},\begin{bmatrix}
0.65\\0.77\end{bmatrix}\right),
\end{align*}
which is readily shown to satisfy the dynamics.
We consider no constraints on the input or the output.
In an open-loop experiment, an input-output trajectory of length $N=400$ is measured, where the input is chosen randomly from the unit interval, i.e., $u^d_k\in[-1,1]^2$, and the output is subject to 
uniformly distributed additive measurement noise with bound $\bar{\varepsilon}=0.002$.
%multiplicative measurement noise, i.e., $\tilde{y}_k^d=y_k^d(1+\varepsilon_k^d)$ with $\lVert\varepsilon_k^d\rVert_\infty\leq\bar{\varepsilon}=0.005$.
%While the analysis of Section~\ref{sec:robust} considered only additive noise, it extends trivially to multiplicative noise if the output is bounded.
The online measurements used to update the initial conditions~\eqref{eq:robust_MPC2} in the MPC scheme are subject to the same type of noise.

We choose $L=30$ for the prediction horizon as well as the following design parameters
\begin{align*}
Q=3\cdot I_p,\>R=10^{-4} I_m,\>\lambda_\sigma=1000,\>\lambda_\alpha\bar{\varepsilon}=0.1.
\end{align*}
The closed-loop output resulting from the application of Problem~\eqref{eq:robust_MPC} in a $1$-step MPC scheme is displayed in Figure~\ref{fig:four_tank}.
It can be seen that the control goal is fulfilled, with only slight deviations from the desired equilibrium.
On the other hand, if the same scheme without terminal constraints is applied to the system, then the closed loop is unstable and diverges with the chosen parameters for both a $1$-step and an $n$-step MPC scheme (cf. again Figure~\ref{fig:four_tank}).
This confirms our initial motivation that rigorous guarantees are indeed desirable for data-driven MPC methods, in particular when they are applied to practical systems.
Furthermore, it can also be observed in Figure~\ref{fig:four_tank} that an $n$-step version of the proposed MPC scheme with terminal equality constraints yields slightly better tracking accuracy, compared to the $1$-step scheme.
We note that, with the above choice of parameters, the non-convex constraint~\eqref{eq:robust_MPC5} is automatically satisfied without enforcing it explicitly (cf. Remark~\ref{rk:sigma_bound}).

\begin{figure}
		\begin{center}
		%\subfigure[Closed-loop input $u_1$]
		%{\includegraphics[width=0.45\textwidth]{Figures/example_u1}}
		%\subfigure[Closed-loop input $u_2$]
		%{\includegraphics[width=0.45\textwidth]{Figures/example_u2}}
		\subfigure[Closed-loop output $y_1$]
		{\includegraphics[width=0.49\textwidth]{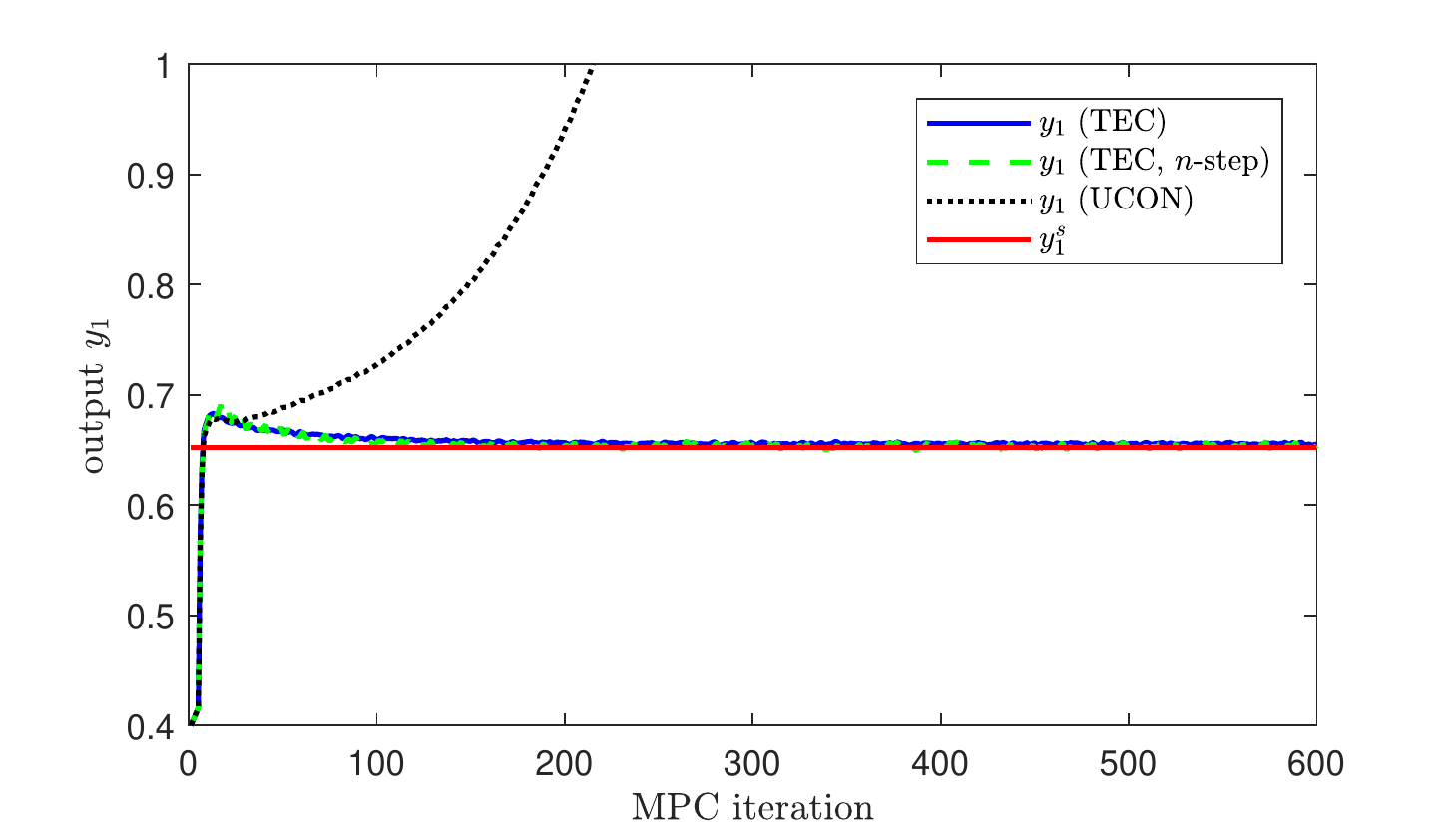}}
		\subfigure[Closed-loop output $y_2$]
		{\includegraphics[width=0.49\textwidth]{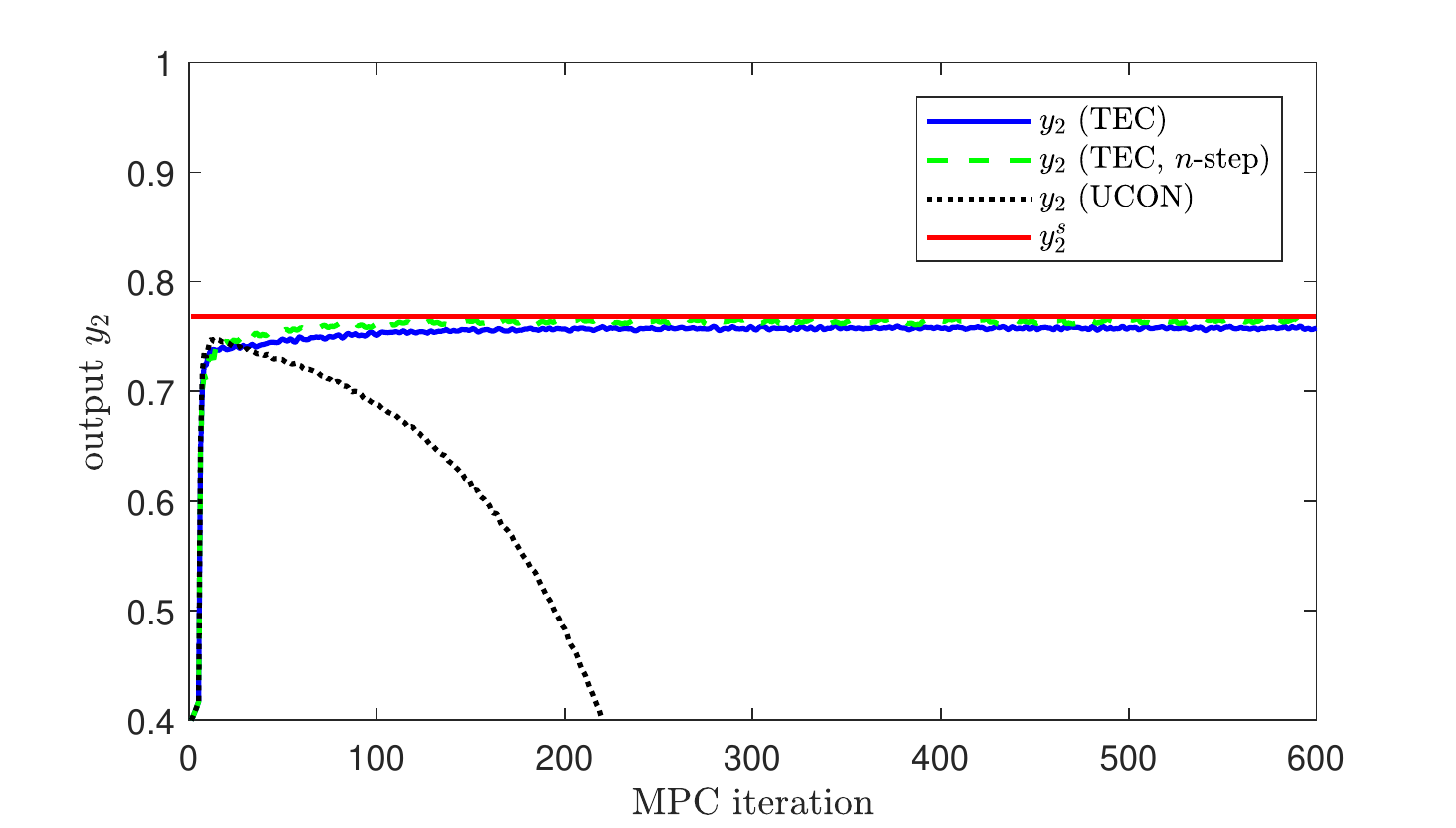}}
		\end{center}
		\caption{Closed-loop output, resulting from the application of the robust data-driven MPC scheme with terminal equality constraints in a $1$-step fashion (TEC), in an $n$-step fashion (TEC, $n$-step), and without terminal equality constraints in a $1$-step fashion (UCON).}	\label{fig:four_tank}
\end{figure}

Theorem~\ref{thm:robust} gives qualitative guidelines for the tuning of the design parameters to guarantee robust stability.
In the following, we analyze the influence of various parameters on the closed-loop behavior.
Theorem~\ref{thm:robust} requires that the regularization parameters lie within specific bounds.
This is confirmed for the present example, where the MPC scheme achieves desirable closed-loop performance similar to Figure~\ref{fig:four_tank} as long as $0.05\leq\lambda_\alpha \bar{\varepsilon}\leq0.5$.
If $\lambda_\alpha$ is chosen too low, then the closed loop is unstable since the norm of $\alpha^*(t)$ and hence the amplification of the measurement noise in~\eqref{eq:robust_MPC1} is too large.
On the contrary, if $\lambda_\alpha$ is chosen too large, then the asymptotic tracking error increases since the cost term $\lambda_\alpha\bar{\varepsilon}\lVert\alpha^*(t)\rVert_2$ dominates over the tracking cost.
Similarly, if $\lambda_\sigma<500$, then the closed loop may be unstable since we did not consider the constraint~\eqref{eq:robust_MPC5} and therefore the slack variable is too large, which has a negative impact on the prediction accuracy.
An upper bound on $\lambda_\sigma$ beyond which the closed-loop behavior is undesirable could not be observed for the present example.
Further, if the input weighting $R$ is chosen too low, then the robustness with respect to the noise deteriorates, which can be explained via the bound~\eqref{eq:thm_proof_value_fcn_diff5}, which grows with $1/\lambda_{\min}(R)$.
If the input weighting is chosen large enough, then also an MPC scheme without terminal constraints stabilizes the desired equilibrium.

Regarding the knowledge of the system order $n=4$, it suffices if an upper bound on $n$ is available, i.e., if for instance $n=10$ is used in~\eqref{eq:robust_MPC}.
If the system order is assumed lower than $n=4$, then the closed loop can be unstable.
The prediction horizon $L$ can be chosen (roughly) between $7\leq L\leq 70$.
The upper bound can be explained by noting that a larger $L$ implies that the constant $c_{pe}$ increases (compare the discussion after~\eqref{eq:ass_pe_quantitative}) and therefore, the asymptotic tracking error increases.
On the other hand, the lower bound is due to the terminal equality constraints which require local controllability.
Moreover, the steady-state tracking error, which can be seen e.g. in Figure~\ref{fig:four_tank} (b), may increase or decrease, depending on the particular noise instance, and generally increases with the noise level $\bar{\varepsilon}$.
This confirms again the analysis of Section~\ref{sec:robust}, which showed exponential stability of a set which grows with the noise level.
Finally, if the norm of the data input $u^d$ increases (i.e., $c_{pe}$ decreases), then the tracking error decreases.

\section{Conclusion}\label{sec:conclusion}
In the present paper, we proposed and analyzed a novel MPC scheme with terminal equality constraints, which uses only past measured data for the prediction, without any prior system identification step.
We showed that, for a low noise amplitude, for a large ratio between persistence of excitation and the noise level, and for suitably tuned parameters, the closed loop in an $n$-step MPC scheme is recursively feasible and practically exponentially stable w.r.t. the noise level.
To the best of our knowledge, we have provided the first analysis regarding recursive feasibility and stability for a purely data-driven (model-free) MPC scheme.
Further, the analysis provides qualitative guidelines to choose the design parameters, and it illustrates the influence of other parameters, such as a persistence of excitation bound, on the region of attraction.
While the MPC scheme is simple to implement, its analysis is challenging since we consider two sorts of noise: a) additive output noise and b) in the prediction model, similar to a multiplicative, parametric error in model-based MPC.
In an application to a practical example, we showed that the proposed MPC scheme guarantees stability, whereas an existing data-driven MPC scheme without terminal constraints leads to an unstable closed loop.

Several topics for future research are left open.
Extensions of the presented data-driven MPC approach to online optimization over artificial equilibria and robust output constraint satisfaction are provided in the recent works~\cite{berberich2020tracking} and~\cite{berberich2020constraints}, respectively.
Another extension, which would be highly interesting but also challenging, is the development of data-driven MPC schemes for \emph{nonlinear} systems with meaningful closed-loop guarantees.
Finally, many of the bounds employed in our proofs are conservative, and improving them may lead to less conservative, verifiable conditions on the admissible noise level for closed-loop stability.

% use section* for acknowledgment
%\section*{Acknowledgment}

\bibliographystyle{IEEEtran}
\bibliography{Literature}  

% biography section
% 
% If you have an EPS/PDF photo (graphicx package needed) extra braces are
% needed around the contents of the optional argument to biography to prevent
% the LaTeX parser from getting confused when it sees the complicated
% \includegraphics command within an optional argument. (You could create
% your own custom macro containing the \includegraphics command to make things
% simpler here.)
%\begin{IEEEbiography}[{\includegraphics[width=1in,height=1.25in,clip,keepaspectratio]{mshell}}]{Michael Shell}
% or if you just want to reserve a space for a photo:

\begin{IEEEbiography}[{\includegraphics[width=1in,height=1.25in,clip,keepaspectratio]{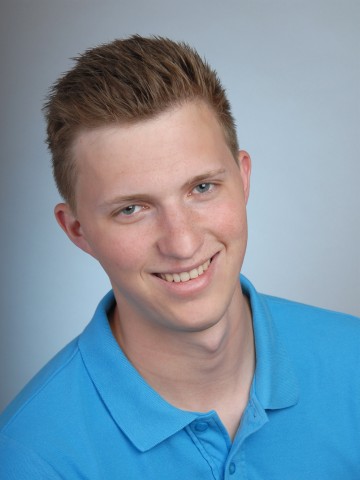}}]{Julian Berberich}
received the Master's degree in Engineering Cybernetics from the University of Stuttgart, Germany, in 2018.
Since 2018, he has been a Ph.D. student at the \emph{Institute for Systems Theory and Automatic Control} under supervision of Prof. Frank Allg\"ower and a member of the International Max-Planck Research School (IMPRS).
His research interests are in the area of data-driven system analysis and control.
\end{IEEEbiography}

\begin{IEEEbiography}[{\includegraphics[width=1in,height=1.25in,clip,keepaspectratio]{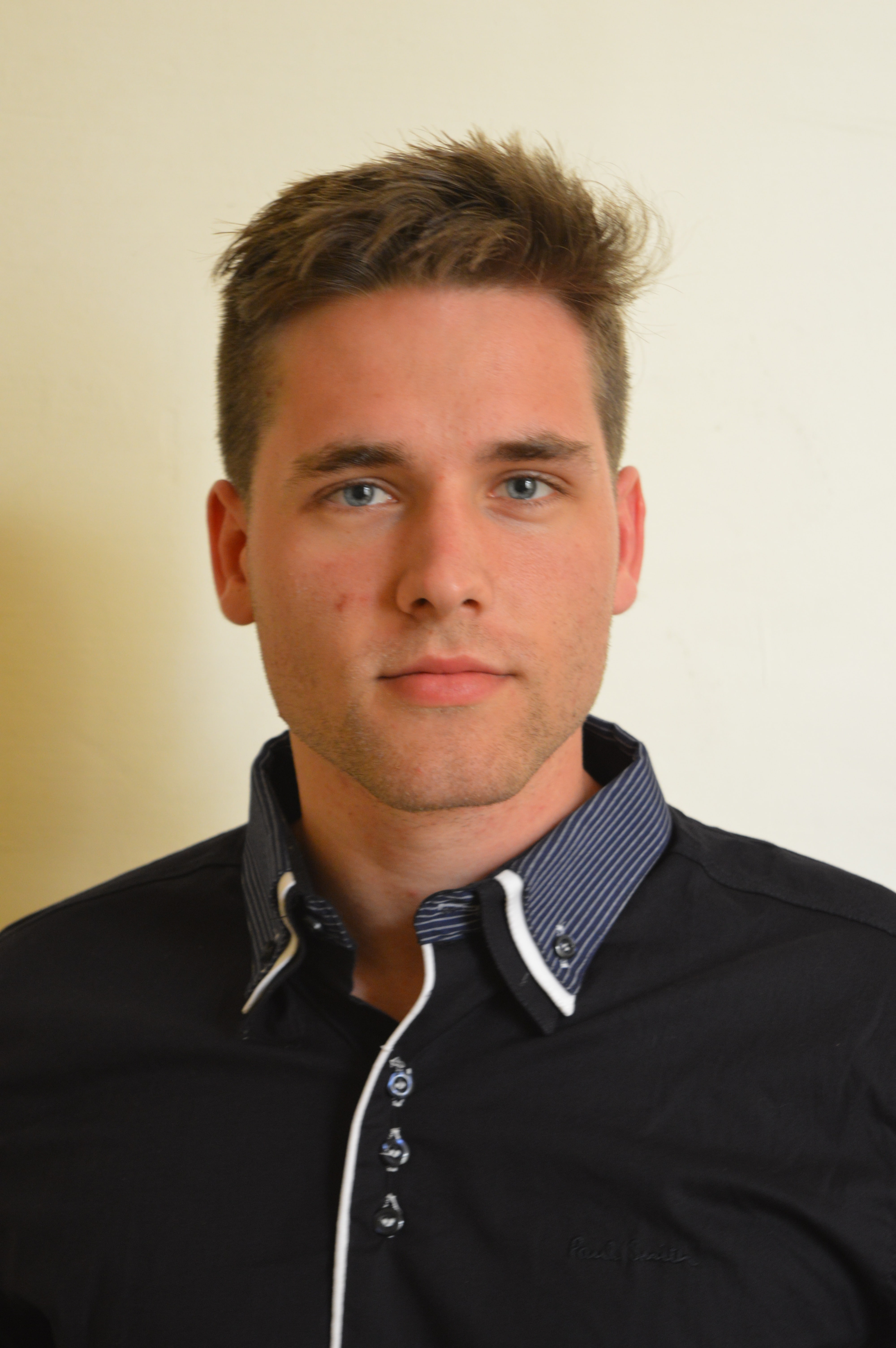}}]{Johannes K\"ohler}
 received his Master degree in Engineering Cybernetics from the University of Stuttgart, Germany, in 2017.
During his studies, he spent 3 months at Harvard University in Na Li's research lab. 
He has since been a doctoral student at the \emph{Institute for Systems Theory and Automatic Control} under the supervision of Prof. Frank Allg\"ower and a member of the Graduate School Soft Tissue Robotics at the University of Stuttgart. 
His research interests are in the area of model predictive control. 
 \end{IEEEbiography}

\begin{IEEEbiography}[{\includegraphics[width=1in,height=1.25in,clip,keepaspectratio]{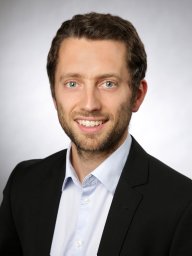}}]{Matthias A. M\"uller}
received a Diploma degree in Engineering Cybernetics  from the University of Stuttgart, Germany, and an M.S. in Electrical and Computer Engineering from the University of Illinois at Urbana-Champaign, US, both in 2009. 
In 2014, he obtained a Ph.D. in  Mechanical Engineering, also from the University of Stuttgart, Germany, for which he received the 2015 European Ph.D. award on control for complex and heterogeneous systems. 
Since 2019, he is director of the Institute of Automatic Control and full professor at the Leibniz University Hannover, Germany. 
His research interests include nonlinear control and estimation, model predictive control, and  data-/learning-based control, with application in different fields including biomedical engineering.

\end{IEEEbiography}

\begin{IEEEbiography}[{\includegraphics[width=1in,height=1.25in,clip,keepaspectratio]{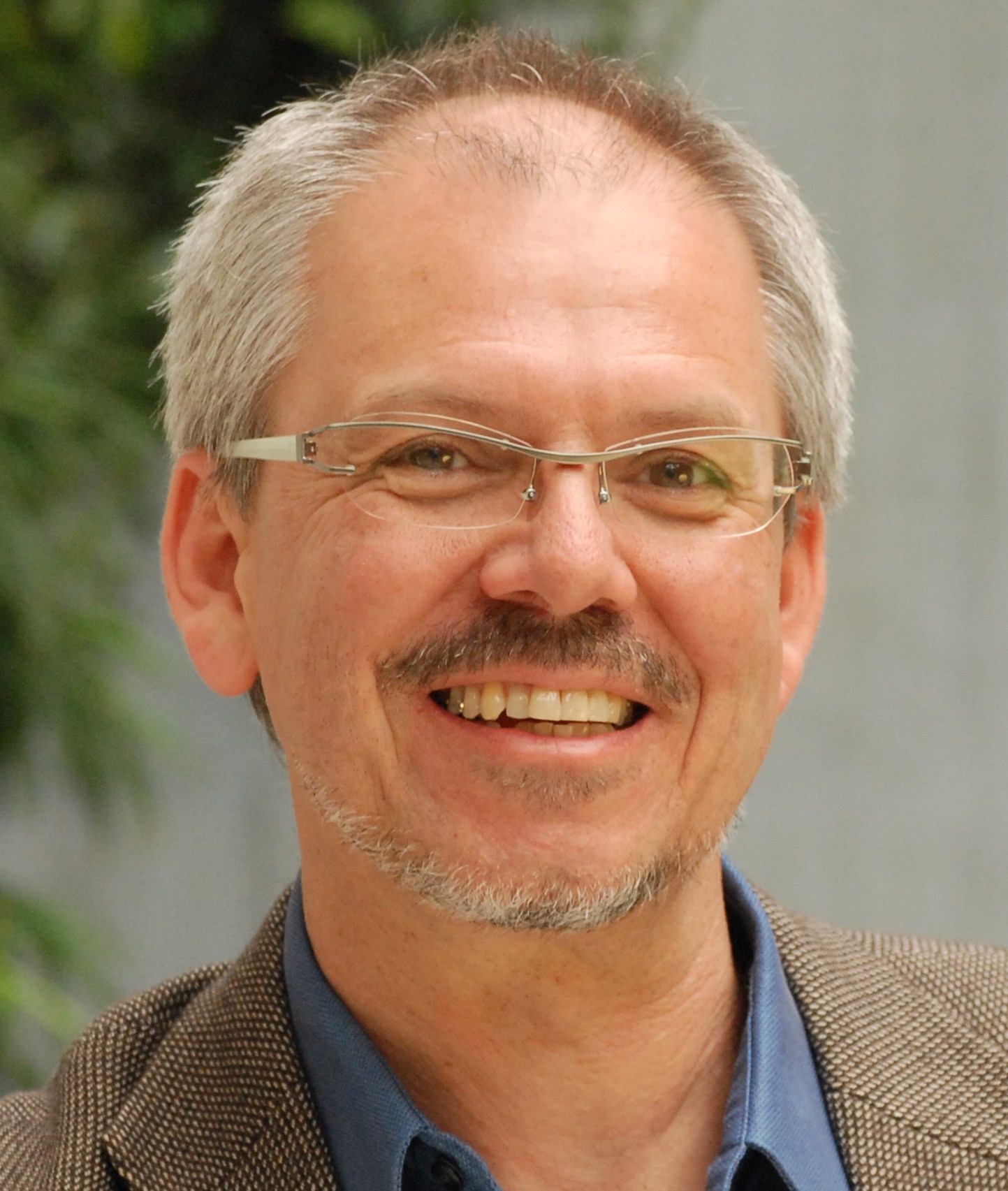}}]{Frank Allg\"ower}
 studied Engineering Cybernetics and Applied Mathematics in Stuttgart and at the University of California, Los Angeles (UCLA), respectively, and received his Ph.D. degree from the University of Stuttgart in Germany. Since 1999 he is the Director of the \emph{Institute for Systems Theory and Automatic Control} and professor at the University of Stuttgart.
His research interests include networked control, cooperative control, predictive control, and nonlinear control with application to a wide range of fields including systems biology.
For the years 2017-2020 Frank serves as President of the International Federation of Automatic Control (IFAC) and since 2012 as Vice President of the German Research Foundation DFG.
\end{IEEEbiography}

% if you will not have a photo at all:
%\begin{IEEEbiographynophoto}{John Doe}
%Biography text here.
%\end{IEEEbiographynophoto}

% insert where needed to balance the two columns on the last page with
% biographies
%\newpage

%\begin{IEEEbiographynophoto}{Jane Doe}
%Biography text here.
%\end{IEEEbiographynophoto}

% You can push biographies down or up by placing
% a \vfill before or after them. The appropriate
% use of \vfill depends on what kind of text is
% on the last page and whether or not the columns
% are being equalized.

%\vfill

% Can be used to pull up biographies so that the bottom of the last one
% is flush with the other column.
%\enlargethispage{-5in}

% that's all folks
\end{document}